\newtheorem{theorem}{Theorem}
\newtheorem{lemma}{Lemma}
\newtheorem{corollary}{Corollary}
\newtheorem{remark}{Remark}
\def\ScaleIfNeeded{
\ifdim\Gin@nat@width>\linewidth \linewidth \else \Gin@nat@width
\fi } \makeatother
\begin{document}

\title{\Huge{Secrecy Outage Probability Analysis for Downlink RIS-NOMA Networks with On-Off Control}}

\author{ Yingjie~Pei, Xinwei\ Yue,~\IEEEmembership{Senior Member,~IEEE}, Wenqiang\ Yi,~\IEEEmembership{Member,~IEEE}, Yuanwei\ Liu,~\IEEEmembership{Senior Member,~IEEE}, Xuehua\ Li,~\IEEEmembership{Member,~IEEE}, and Zhiguo\ Ding,~\IEEEmembership{Fellow,~IEEE}

\thanks{This work was supported in part by National Natural Science Foundation of China (Grant 62071052) and in part by Beijing Natural Science Foundation (Grant L222004). \emph{(Corresponding author: Xinwei Yue.)}}
\thanks{Y. Pei, X. Yue and X. Li are with the Key Laboratory of Information and Communication Systems, Ministry of Information Industry and also with the Key Laboratory of Modern Measurement $\&$ Control Technology, Ministry of Education, Beijing Information Science and Technology University, Beijing 100101, China (email: \{yingjie.pei, xinwei.yue and lixuehua\}@bistu.edu.cn).}
\thanks{W. Yi and Y. Liu are with the School of Electronic Engineering and Computer Science, Queen Mary University of London, London E1 4NS, U.K. (e-mail: w.yi, yuanwei.liu@qmul.ac.uk).}
\thanks{Z. Ding is with the Department of Electrical Engineering, Princeton University, Princeton, USA and also with the School of Electrical and Electronic Engineering, the University of Manchester, Manchester, U.K. (email: zhiguo.ding@manchester.ac.uk).}
}
\maketitle

\begin{abstract}
Reconfigurable intelligent surface (RIS) has been regarded as a promising technology since it has ability to create the favorable channel conditions. This paper investigates the secure communications of RIS assisted non-orthogonal multiple access (NOMA) networks, where both external and internal eavesdropping scenarios are taken into consideration.
More specifically, novel approximate and asymptotic expressions of secrecy outage probability (SOP) for the \emph{k}-th legitimate user (LU) are derived by invoking imperfect successive interference cancellation (ipSIC) and perfect successive interference cancellation (pSIC).
To characterize the secrecy performance of RIS-NOMA networks, the diversity order of the \emph{k}-th LU with ipSIC/pSIC is obtained in the high signal-to-noise ratio region. The secrecy system throughput of RIS-NOMA networks is discussed in delay-limited transmission mode. Numerical results are presented to verify theoretical analysis that: i) The SOP of RIS-NOMA networks is superior to  that of RIS assisted orthogonal multiple access (OMA) and conventional cooperative communication schemes; ii) As the number of reflecting elements increases, the RIS-NOMA networks are capable of achieving the enhanced secrecy performance; and iii) The RIS-NOMA networks have better secrecy system throughput than that of RIS-OMA networks and conventional cooperative communication schemes.
\end{abstract}
\begin{keywords}
{R}econfigurable intelligent surface, non-orthogonal multiple access, physical layer security, outage probability
\end{keywords}
\section{Introduction}
As one of pivotal technologies for the next generation communication networks, non-orthogonal multiple access (NOMA) has attracted extensive attention due to its remarkable spectral efficiency and superiority for massive connectivity compared to orthogonal multiple access (OMA)\cite{2015LinglongNOMA,2017YuanweiNOMA,you2021towards}. More specifically, NOMA enables multiple users to transmit information in the same resource block through different power allocations \cite{2014ChoiNOMA,2017ZhiguoNOMA}. By taking into account the choices of users' rates, the authors of \cite{ding2014performance} investigated the performance of NOMA networks in terms of outage probability and ergodic rate. Futhermore, the cooperative NOMA scheme was surveyed in \cite{2018XinweiFDNOMA}, where the nearby user with better channel conditions was regarded as the relaying. To satisfy the needs of cell-edged users, the successive user relaying in NOMA networks was considered in \cite{2019SuccessUserReplay}, where the wireless links of distant user were established with the help of user relaying. As a further development, the authors of \cite{2019RelaySelCoNOMA} investigated the relaying selection and interference elimination schemes to enhance the outage behaviors. In \cite{2020XingwangRHINOMA}, the impact of residual hardware impairments on NOMA networks was surveyed which leads to zero diversity order in high signal-to-noise ratio (SNR) region. Regarding to the massive connectivity scenarios, the authors of \cite{2020MassConnNOMA} analysed the achievable rate of NOMA networks according to the minimal pairing distance of users.

Since the wireless signals are easily exposed to complex electromagnetic environments, the physical layer security technology has ability to ensure the data secure transmission. \cite{2015YangnanPLS,2018yongpengPLS,2018LvluNOMAPLS}. Until now, the secure communications of NOMA networks have been discussed in many treatises. In \cite{2017YuanweiANNOMAPLS}, the authors evaluated the secrecy performance of large-scale downlink NOMA networks with the presence of an eavesdropper (Eve). A millimeter wave based NOMA secrecy beamforming scheme was proposed in \cite{2020XiaomingNOMAPLS}, where the secrecy outage probability (SOP) of paired users was obtained. Given the practical distribution of users, the authors of \cite{2021CaihongANNOMAPLS} studied the secrecy NOMA transmission with randomly distributed legitimate users (LUs) and the SNR was assumed to be fixed/dynamic. For further security improvement, the authors in \cite{2021ZhiguoJammingNOMAPLS} studied the SOP of NOMA networks by invoking the effective full-duplex (FD) jamming scheme. Different from the point-to-point orthogonal transmission, the superimposed signal is transmitted simultaneously to all users in NOMA networks, which makes it possible for users with poor channel condition to wiretap others' information as internal Eves \cite{2017ZhiguoInterEve}. Triggered by this, the authors of \cite{2020XinweiUnifiedNOMA} researched the secrecy performance of unified NOMA framework by taking into account stochastic geometry under external and internal eavesdropping scenarios. Additionally, the secure communications of cooperative NOMA networks were revealed in \cite{2022VincentNOMAPLS}, where the secrecy throughput was improved greatly.

Reconfigurable intelligent surface (RIS) has been widely considered as an effective approach to improve the reliability of wireless communication networks \cite{2019QingqingIRS,2020QingqingIRSmagzine,2021QingqingIRS}. With the assistance of low-cost reconfigurable passive elements \cite{2021CunhuaIRSmagzine}, RIS is capable of reconfiguring the amplitude and phase shifts of incident electromagnetic signals, via adjusting the propagation direction of reflecting signals. Compared with traditional active relaying, the authors of \cite{2020ZhangruiIRS} highlighted the achievable rate of RIS-enhanced networks. In \cite{2021BjornIRS}, the ergodic capacity of RIS assisted single-input single-output system were analysed in detail. For further exploration, the authors of \cite{2020TaoqinIRS} considered extra direct links between transceivers and surveyed outage behaviors of single-antenna equipped communication networks by the aid of reasonable RIS deployment. In \cite{2020SpeEneEffIRS}, the energy efficiency of RIS-assisted multiple-input single-output system was highlighted carefully. Limited by fixed location and finite coverage, however, single RIS can be difficult to serve multiple users within a wide range. To resolve this issue, multiple RISs were introduced in \cite{2020ZhangruiIRSSpatThroput} to realize enhanced spatial throughput of a single-cell multiuser system. Furthermore, the authors of \cite{2021MultiIRS_OP} studies the strategy of selecting multiple RISs based on the optimal SNR.

In light of the above discussions, researchers have dived into surveying the integrate of RIS to NOMA networks \cite{2020QingqingIRSNOMA,2020YuanweiIRSNOMA,2021YuanweiIRSNOMA}. Several excellent works have manifested that the performance of RIS-NOMA scheme surpasses that of conventional NOMA transmission. The authors of \cite{2021TahirIRSNOMA} took direct uplinks into account and verified the non-line-of-sight severed users can distinctly profit from RIS compared to the line-of-sight severed users. In \cite{2021YuanweiDownUplinkIRSNOMA}, the authors synthetically investigated uplink and downlink RIS-NOMA scenarios, where the closed-form expression of outage probability for the cell-edge user was derived. However, these works are supposed that the incident and reflecting waves at RIS can perfectly realize coherent matching with the assistance of RIS phase shifters, which will lead to additional signaling overhead. To tackle this problem, the coherent phase shifting and random phase shifting of RIS-NOMA were proposed in \cite{2020ZhiguoTwoPhaseShift}, which achieves the tradeoff between complexity and reliability. Furthermore, the authors of \cite{2020ZhiguoSimpleDesignIRSNOMA} applied the on-off control scheme as a special case of random phase shifting in NOMA networks to enhance the outage behaviors of RIS-NOMA networks.
The ergodic rate of RIS-NOMA for multiple users were evaluated in-depth with imperfect/perfect successive interference cancellation (ipSIC/pSIC) in \cite{2022XinweiIRSNOMA}, where the 1-bit scheme was utilized at RIS. Further, the authors of \cite{2022ZeyuIRSNOMA} analysed the required power and outage performance by introducing continuous/discrete phase shifting in RIS-NOMA networks with multiple antennas.
The authors of \cite{2021YuanweiMultiIRSNOMA} proposed a multiple RISs-assisted NOMA scheme and each RIS was assumed to serve a targeted user precisely.

RIS is able to be explored for ensuring the secure communication of wireless networks and blended with physical layer security. A RIS assisted secure wireless transmission system was considered in \cite{2019CuimiaoRISsec}, where the secrecy rate was maximized to protect the privacy of LUs. In \cite{2019XianghaoRIS}, the authors evaluated secrecy performance of RIS assisted multiple-input single-output system by designing phase shift and beamforming. The authors of \cite{2020MIMOIRSsec_2} jointly optimized the transmit and phase shift coefficients to obtain the maximal secrecy rate in RIS-assisted multiple-input multiple-output (MIMO) system. Given the passivity of Eves, a robust RIS-aided transmission scheme was proposed in \cite{2020XianghaoRobust}, where the channel state information (CSI) of Eves cannot be perfectly acquired. To further explore the secrecy performance of RIS-NOMA networks, the authors of \cite{2021QingqingNOMAIRSPLS} focused on defending lacking CSI eavesdropping by employing artificial noise (AN). In \cite{2022YuanweiIRSNOMAPLS}, RIS was utilized to guarantee secure NOMA transmission with only Eve's statistical CSI. In \cite{2022MaxminIRSNOMAPLS}, the active and passive beamforming vectors were redesigned to improve the security of RIS-NOMA while ensuring users' fairness. The authors in \cite{2020IRSNOMAPLS} deployed RIS to achieve higher channel gains and protect users from vicious external Eves in NOMA networks. On this basis, further investigation was processed in \cite{2022CaihongIRSNOMAPLS} to analyse the negative effect caused by external/internal Eve on the SOP as well as effective secrecy throughput of RIS-NOMA networks. Recently, the secrecy performance of simultaneously transmitting and reflecting (STAR) RIS aided NOMA networks was investigated in \cite{2022STARRISPLSAN}, with the unrealistic assumption of perfect eavesdropping CSI. Furthermore, the authors of \cite{2022STARRISPLSuplink} surveyed the secrecy uplink transmission of STAR-RIS assisted NOMA networks by jointly optimizing receive beamforming, transmitting power, and reflecting/transmission coefficients, where the instantaneous CSI of Eve is also required.

\subsection{Motivations and Contributions}
The previously mentioned theoretical literature has laid a solid foundation for the comprehension of RIS-NOMA networks. With regard to urban secure communication, the authors of \cite{2020XinweiUnifiedNOMA} highlighted the secrecy outage behaviour in an unified NOMA network without utilizing RIS. Additionally, a secure robust design of RIS-NOMA networks was proposed in \cite{2021QingqingNOMAIRSPLS}, with the impractical assumption of continuous phase shift and pSIC. Inspired by these treatises, we specifically investigate secure communications of RIS-NOMA networks, where the superimposed wireless signals are delivered from base station (BS) to multiple non-orthogonal users via the assistance of RIS with the presence of Eves. More specifically, both external/internal eavesdropping scenarios are fully discussed and the residual interference caused by ipSIC is also taken into account. Given the coherent phase shifting may cause excessive signalling overhead and is impractical due to the finite resolution of phase shifters at RIS, the on-off control is selected as a feasible scheme to redesign the phase shifts of RIS for realizing the secure transmission of RIS-NOMA networks \cite{2020ZhiguoSimpleDesignIRSNOMA,cui2014coding}. Referring to the aforementioned explanations, the primary contributions of this manuscript can be summarized in detail as follows:

\begin{enumerate}
  \item We study the secure performance of RIS-NOMA networks, where both external and internal eavesdropping scenarios are taken into consideration. Specifically, we derive approximate and asymptotic expressions of SOP for the \emph{k}-th LU under both ipSIC/pSIC conditions.
 To glean more insights, the secrecy diversity orders of the \emph{k}-th LU at high SNRs are obtained. We observe that secrecy diversity order for the \emph{k}-th LU with pSIC is equal to \emph{k}, while that with ipSIC is equal to zero since there exists an error floor.
  \item We verify that the secrecy outage behaviours of RIS-NOMA networks outperform that of RIS-OMA and conventional cooperative relaying schemes, i.e., amplify-and-forward (AF) relaying and decode-and-forward (DF) relaying in both half-duplex (HD) and FD modes. With increasing the number of reflecting elements, RIS-NOMA can achieve superior secrecy performance. Moreover, the SOP rises remarkably by augmenting the values of residual interference, target secrecy rates and the distances from RIS to BS/LUs. We also notice that the valid power allocation in external eavesdropping scenarios can exert an opposite impact on the transmission privacy in internal eavesdropping scenarios.
  \item We analyse the secrecy system throughput of RIS-NOMA networks in delay-limited transmission mode under external and internal eavesdropping scenarios. We confirm that the secrecy system throughput converge to a consistent value in high SNR region. In addition, the system throughput of RIS-NOMA is superior to that of RIS-OMA, AF relaying and HD/FD DF relaying since the RIS assisted wireless communications are able to realize high spectral efficiency and reliable channel environment.
\end{enumerate}

\subsection{Organization and Notation}
The remainder of this paper is given in the following. In Section \ref{Section II}, the system model and signal formulas for RIS-NOMA secure communication networks are depicted. The channel statistics and the SOP analysis in both external/internal eavesdropping scenarios are introduced in Section \ref{Section_III}. Numerical results are present in Section \ref{SectionIV} and followed by our conclusion in Section \ref{SectionV}.

The primary notations appeared in this paper are shown as follows. $\mathbb{E}\left\{  \cdot  \right\}$ represents the expectation operation. ${F_X}\left(  \cdot  \right)$ and ${f_X}\left(  \cdot  \right)$ denote the cumulative distribution function (CDF) and the probability density function (PDF) of parameter \emph{X}, respectively. ${{\mathbf{I}}_\mu }$ denotes a $\mu  \times \mu $ identity matrix and ${{\mathbf{1}}_\mu }$ denotes a $\mu  \times 1$ all-ones column vector. $ \otimes $ means Kronecker product.
\section{System Model}\label{Section II}
\subsection{System Descriptions}
Considering a RIS-assisted NOMA secure communication scenario as depicted in Fig. 1, where a BS transmits the superposed signals to \emph{K} LUs via the assistance of a RIS in the presence of an Eve. For the purpose of intuitive analysis, we assume that each node in the system is equipped with single antenna. The RIS is mounted with \emph{M} reconfigurable reflecting elements, which can be controlled by a programmable logic device. The complex channel coefficient from the BS to RIS, from the RIS to \emph{k}-th LU, and from the RIS to the Eve are represented by ${{\mathbf{h}}_{br}} \in {\mathbb{C}^{M \times 1}}$, ${{\mathbf{h}}_{rk}} \in {\mathbb{C}^{M \times 1}}$ and ${{\mathbf{h}}_{re}} \in {\mathbb{C}^{M \times 1}}$, respectively. We suppose that the perfect CSI of LUs can be obtained by means of pilot signals \cite{zheng2020intelligent,zheng2019intelligent}. However, the CSI of Eve is hard to acquire since it barely exchange CSI with BS. Furthermore, even if the BS can get the eavesdropping CSI with little leaked information from Eve, it tends to be outdated \cite{2020MIMOIRSsec_2}. As a consequence, we assume the CSI of Eves is unknown to the BS\footnote{In practical cases, the channel estimation process always results in errors due to the RIS quantization errors \cite{2020ChangshengYouChanEsti} or noisy estimation \cite{2020LiangLiuChannEsti}. Most literature focuses on improving the robustness of secure RIS networks with the assistance of artificial noise and the joint optimization of transmitting beamformer as well as RIS phase matrix \cite{2020XianghaoRobust,2021QingqingNOMAIRSPLS}.}. Given that obstacles can scatter a great deal of radio signals in practical urban district scenarios, all wireless links from BS to RIS and to LUs/Eve are supposed to be Rayleigh fading channels. The direct links between BS and LUs/Eve experience severe attenuation, and thus the communication can be only established through the RIS. Without loss of generality, the cascade channel gains of BS-RIS and-LUs are ordered as ${\left| {{\mathbf{h}}_{_{r1}}^H{\mathbf{\Theta }}{{\mathbf{h}}_{br}}} \right|^2} \leqslant  \cdots  \leqslant {\left| {{\mathbf{h}}_{_{rk}}^H{\mathbf{\Theta }}{{\mathbf{h}}_{br}}} \right|^2} \leqslant  \cdots  \leqslant {\left| {{\mathbf{h}}_{_{rK}}^H{\mathbf{\Theta }}{{\mathbf{h}}_{br}}} \right|^2}$, where ${\mathbf{\Theta }} = \beta {\text{diag}}\left( {{e^{j{\theta _1}}},{e^{j{\theta _2}}}...,{e^{j{\theta _M}}}} \right) \in {\mathbb{C}^{M \times M}}$ represents the reflecting parameters matrix of RIS, $\beta  \in \left[ {0,1} \right]$ and ${\theta _m} \in \left[ {0,2\pi } \right)$ are the reflecting amplitude parameter and the phase shift of the \emph{m}-th reflecting element, respectively.

\begin{figure}[t!]
    \begin{center}
        \includegraphics[width=2.784in,  height=1.84in]{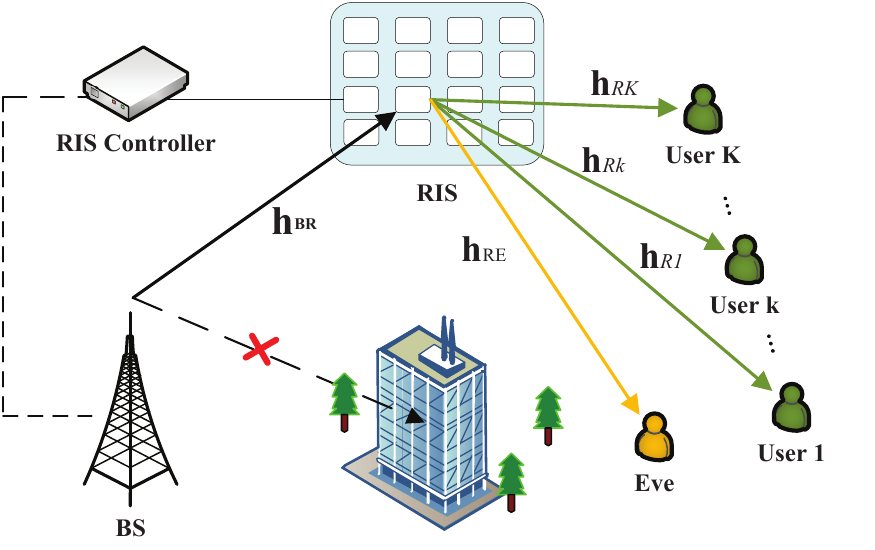}
        \caption{System model of RIS assisted NOMA secure communication networks.}
        \label{sys_model.eps}
    \end{center}
\end{figure}

\subsection{Signal Model}
In RIS-NOMA communication networks, the signal streams of LUs are broadcast to each LU by using the superposition coding scheme. Hence the received signal at \emph{k}-th LU is shown as
\begin{align}\label{the received signal at users}
{y_k} = {\mathbf{h}}_{rk}^H{\mathbf{\Theta }}{{\mathbf{h}}_{br}} {\sum\limits_{i = 1}^K {\sqrt {{a_i}{P_s}} {x_i}} }  + {n_k},
\end{align}
where ${{x_i}}$ is supposed to be unity power signal for \emph{i}-th LU, i.e., $\mathbb{E}\{ {\left| {{x_i}} \right|^2}\}  = 1$. The ${{P_s}}$ is the normalized transmission power at BS. To ensure user fairness, ${{a_i}}$ denotes the power allocation factor of \emph{i}-th LU, which meets the relationship ${a_1} \geqslant  \cdots  \geqslant {a_k} \geqslant  \cdots  \geqslant {a_K}$ with $\sum\limits_{i = 1}^K {{a_i}}  = 1$.
The ${{\mathbf{h}}_{br}} = {\left[ {h_{br}^1 \cdots h_{br}^m \cdots h_{br}^M} \right]^H}$, where the $h_{br}^m = \sqrt {d_{br}^{ - \alpha }} \tilde h_{br}^m \sim \mathcal{C}\mathcal{N}\left( {0,{N_{br}}} \right)$ denotes the complex channel coefficient between BS and the \emph{m}-th reflecting element of RIS and $\tilde h_{br}^m \sim \mathcal{C}\mathcal{N}\left( {0,1} \right)$. The ${{\mathbf{h}}_{rk}} = {\left[ {h_{rk}^1 \cdots h_{rk}^m \cdots h_{rk}^M} \right]^H}$, where the $h_{rk}^m = \sqrt {d_{rk}^{ - \alpha }} \tilde h_{rk}^m \sim \mathcal{C}\mathcal{N}\left( {0,{N_{rk}}} \right)$ denotes the complex channel coefficient from the \emph{m}-th reflecting element of RIS to the \emph{k}-th LU and $\tilde h_{rk}^m \sim \mathcal{C}\mathcal{N}\left( {0,1} \right)$. The $\alpha $ is path loss exponent. The ${d_{br}}$ and the ${d_{rk}}$ denote the distance from BS to RIS and from RIS to the \emph{k}-th LU, respectively. The ${n_k}$ denotes the additive white Gaussian noise (AWGN) with mean power parameter ${N_U}$.

According to the NOMA principle, the received signal-to-interference-plus-noise ratio (SINR) for the \emph{k}-th LU to detect the \emph{g}-th LU information (\emph{k} $\geqslant $ \emph{g}) can be given by
\begin{align}\label{SINR k decode g}
{\gamma _{k \to g}} =  \mathop {}\nolimits^{} \mathop {}\nolimits^{} \frac{{\rho {{\left| {{\mathbf{h}}_{rk}^H{\mathbf{\Theta }}{{\mathbf{h}}_{br}}} \right|}^2}{a_g}}}{{\rho {{\left| {{\mathbf{h}}_{rk}^H{\mathbf{\Theta }}{{\mathbf{h}}_{br}}} \right|}^2} {{\nu _g}}  + \varpi \rho {{\left| {{h_{ipu}}} \right|}^2} + 1}},
\end{align}
where ${\nu _g} = \sum\nolimits_{i = g + 1}^K {{a_i}}$, $\rho  = \frac{{{P_s}}}{{{N_U}}}$ denotes the transmit SNR of legitimate channels between the BS and LUs. The $\varpi  \in \left[ {0,1} \right]$ is the residual interference level for SIC. More specifically, $\varpi  = 0$ and $\varpi  \ne 0$ indicate the operations of pSIC and ipSIC, respectively. Without loss of generality, we assume that the residual interference caused by ipSIC is modeled as the Rayleigh fading and corresponding complex channel parameter is represented by ${h_{ipu}} \sim \mathcal{C}\mathcal{N}\left( {0,{N_{ipu}}} \right)$ \cite{2022XinweiIRSNOMA}.

After striking out the previous (\emph{K}-1) LUs' messages with SIC, the received SINR for the \emph{K}-th LU to detect its own information can be shown as
\begin{align}\label{SINR K decode K}
{\gamma _K} = \frac{{\rho {{\left| {{\mathbf{h}}_{rK}^H{\mathbf{\Theta }}{{\mathbf{h}}_{br}}} \right|}^2}{a_K}}}{{\varpi \rho {{\left| {{h_{ipu}}} \right|}^2} + 1}}.
\end{align}

The broadcast characteristics are capable of increasing the susceptibility to potential eavesdropping during the propagation of wireless signals. Inspired by this, the external and internal eavesdropping scenarios are both discussed in the following parts.

\subsubsection{External Eavesdropping Scenario}
The baleful external Eve tends to overhear the legitimate channels. In this case, the received signal at external Eve is given by ${y_{EE}} = {\mathbf{h}}_{re}^H{\mathbf{\Theta }}{{\mathbf{h}}_{br}} {\sum\limits_{i = 1}^K {\sqrt {{a_i}{P_s}} {x_i}}} + {n_e}$, where the ${{\mathbf{h}}_{re}} = {\left[ {h_{re}^1 \cdots h_{re}^m \cdots h_{re}^M} \right]^H}$, the $h_{re}^m = \sqrt {d_{re}^{ - \alpha }} \tilde h_{re}^m \sim \mathcal{C}\mathcal{N}\left( {0,{N_{re}}} \right)$ denotes the complex channel coefficient from the \emph{m}-th reflecting element of RIS to the Eve and $\tilde h_{rk}^m \sim \mathcal{C}\mathcal{N}\left( {0,1} \right)$. The ${d_{re}}$ represents the distance from RIS to the Eve. The ${n_e}$ represents the AWGN of Eve with mean power parameter ${N_E}$.
After receiving the superposed signal, the Eve starts decoding the \emph{k}-th LU's message while regarding the rest of information as interference by applying SIC. As a consequence, the SINR for external Eve to decode the \emph{k}-th LU's information (\emph{k} $ < $ \emph{K}) can be given by
\begin{align}\label{SINR EE decode k}
{\gamma _{EE \to k}} = \mathop {}\nolimits^{} \mathop {}\nolimits^{} \frac{{{{\left| {{\mathbf{h}}_{re}^H{\mathbf{\Theta }}{{\mathbf{h}}_{br}}} \right|}^2}{a_k}{\rho _e}}}{{{\rho _e}{{\left| {{\mathbf{h}}_{re}^H{\mathbf{\Theta }}{{\mathbf{h}}_{br}}} \right|}^2}{{\nu _k}} + \varpi {\rho _e}{{\left| {{h_{ipe}}} \right|}^2} + 1}},
\end{align}
where ${\nu _k} = \sum\nolimits_{i = k + 1}^K {{a_i}}$, ${\rho _e} = \frac{{{P_s}}}{{{N_E}}}$ represents the transmit SNR of eavesdropping channel from BS to Eve. The ${h_{ipe}} \sim \mathcal{C}\mathcal{N}\left( {0,{N_{ipe}}} \right)$ represents the corresponding complex channel parameter of residual interference.

The SINR for external Eve to decode the \emph{K}-th LU's information can be given by
\begin{align}\label{SINR EE decode K}
{\gamma _{EE \to K}} = \frac{{{{\left| {{\mathbf{h}}_{re}^H{\mathbf{\Theta }}{{\mathbf{h}}_{br}}} \right|}^2}{a_K}{\rho _e}}}{{ \varpi {\rho _e}{{\left| {{h_{ipe}}} \right|}^2} + 1}}.
\end{align}
\subsubsection{Internal Eavesdropping Scenario}
Under this case, the most distant user is regarded as an internal Eve since it has the worst channel condition\footnote{
Theoretically, the near user can also act as an internal Eve. At this time, the secrecy rate of the system will deteriorate or even drop below the
target secrecy rate, since it can decode far user's information directly. This situation is even more challenging for the security design of RIS-NOMA networks, which will be considered in our future work.
}\cite{2020XinweiUnifiedNOMA,ding2017spectral}. As a result, the received signal at internal Eve is given by ${y_{IE}} = {\mathbf{h}}_{r1}^H{\mathbf{\Theta }}{{\mathbf{h}}_{br}} {\sum\limits_{i = 1}^K {\sqrt {{a_i}{P_s}} {x_i}}} + {n_e}$. In this case, the SINR for internal Eve to detect the \emph{k}-th LU information (1 $<$ \emph{k} $<$ \emph{K}) is given as follows
\begin{align}\label{SINR IE decode k}
{\gamma _{IE \to k}} = \mathop {}\nolimits^{} \mathop {}\nolimits^{} \frac{{{{\left| {{\mathbf{h}}_{r1}^H{\mathbf{\Theta }}{{\mathbf{h}}_{br}}} \right|}^2}{a_k}{\rho _e}}}{{{\rho _e}{{\left| {{\mathbf{h}}_{r1}^H{\mathbf{\Theta }}{{\mathbf{h}}_{br}}} \right|}^2} {{\nu _k}}  + \bar \omega {\rho _e}{{\left| {{h_{ipe}}} \right|}^2} + 1}}.
\end{align}

Similarly, the SINR for internal Eve to decode the \emph{K}-th LU's information can be given by
\begin{align}\label{SINR IE decode K}
{\gamma _{IE \to K}} = \frac{{{{\left| {{\mathbf{h}}_{r1}^H{\mathbf{\Theta }}{{\mathbf{h}}_{br}}} \right|}^2}{a_K}{\rho _e}}}{{\bar \omega {\rho _e}{{\left| {{h_{ipe}}} \right|}^2} + 1}}.
\end{align}
\subsection{RIS-NOMA with On-off Control}
As revealed in the aforementioned analysis, choosing an appropriate phase shift design is pivotal to evaluating the secure performance of RIS-NOMA networks. Hence, a typical type of phase shift design, i.e., on-off control, is considered in this paper, where each diagonal element of RIS is regarded as 1 (on) or 0 (off)\footnote{As a special case of random phase shifting, the on-off control strategy is easy to deploy and extend to MIMO as well as multi-cell scenarios with less demand for CSI. However, the on-off control may not yield the superior diversity order unless the operating state of the reflecting elements is further optimised on top of this, which is beyond the scope of this work.
} \cite{2020ZhiguoSimpleDesignIRSNOMA,2022XinweiIRSNOMA}. More specifically, suppose that \emph{M} = \emph{PQ}, where \emph{P} and \emph{Q} are both positive integers. We define ${\mathbf{V}} = {{\mathbf{I}}_P} \otimes {{\mathbf{1}}_Q} \in {\mathbb{C}^{M \times P}}$ and the \emph{p}-th column of ${\mathbf{V}}$ is denoted by ${{\mathbf{v}}_p}$. As a result, the cascaded Rayleigh channels ${{\mathbf{h}}_{_{rk}}^H{\mathbf{\Theta }}{{\mathbf{h}}_{br}}}$ and ${{\mathbf{h}}_{re}^H{\mathbf{\Theta }}{{\mathbf{h}}_{br}}}$ can be defined as ${H_k} \triangleq {\mathbf{h}}_{rk}^H{\mathbf{\Theta }}{{\mathbf{h}}_{br}} = {\mathbf{v}}_p^H{{\mathbf{D}}_{rk}}{{\mathbf{h}}_{br}}$ and ${H_e} \triangleq {\mathbf{h}}_{re}^H{\mathbf{\Theta }}{{\mathbf{h}}_{br}} = {\mathbf{v}}_p^H{{\mathbf{D}}_{re}}{{\mathbf{h}}_{br}}$ with $k = 1,2, \cdots ,K$, where ${{\mathbf{D}}_{rk}}$ and ${{\mathbf{D}}_{re}}$ are diagonal matrices with their diagonal elements composed from ${\mathbf{h}}_{rk}^H$ and ${\mathbf{h}}_{re}^H$, respectively.

With the design of on-off control, equation (\ref{SINR k decode g}) and (\ref{SINR K decode K}) can be rewritten as
\begin{align}\label{SINR k decode g v2}
{{\tilde \gamma }_{k \to g}} = \mathop {}\nolimits^{} \mathop {}\nolimits^{} \frac{{\rho {{\left| {{{{H}}_{k}}} \right|}^2}{a_g}}}{{\rho {{\left| {{{{H}}_{k}}} \right|}^2} {{{\nu _g}}}+ \varpi \rho {{\left| {{h_{ipu}}} \right|}^2} + 1}},
\end{align}
and
\begin{align}\label{SINR K decode K v2}
{{\tilde \gamma }_K} = \frac{{\rho {{\left| {{{{H}}_{K}}} \right|}^2}{a_K}}}{{\varpi \rho {{\left| {{h_{ipu}}} \right|}^2} + 1}},
\end{align}
respectively.
For the sake of facilitating subsequent analysis, several channel statistical formulas are derived in the following.
\begin{lemma} \label{Lemma1}
Under the condition of on-off control, the CDF of SINR for the k-th LU to decode the g-th LU's information (1 $ \leqslant $ g $ \leqslant $ k) with ipSIC can be given by
\begin{small}
\begin{align}\label{CDF SINR k decode g ipSIC}
\begin{split}
  F_{{{\tilde \gamma }_{k \to g}}}^{ipSIC}\left( x \right) \approx& \kappa \sum\limits_{l = 0}^{K - k} {\sum\limits_{d = 0}^D {
  {K - k} \choose
  l } } \frac{{{{\left( { - 1} \right)}^l}{G_d}}}{{k + l}}\left[ {1 - \frac{2}{{\Gamma \left( Q \right)}}} \right. \\
   &\times {\left. {{{\left( {\frac{{{\zeta _1}x}}{{{a_g} - {\nu _g}x}}} \right)}^{\frac{Q}{2}}}{K_Q}\left( 2\sqrt{{\frac{{{\zeta _1}x}}{{{a_g} - {\nu _g}x}}}} \right)} \right]^{k + l}},
\end{split}
\end{align}
\end{small}where $\kappa  = \frac{{K!}}{{\left( {K - k} \right)!\left( {k - 1} \right)!}}$, ${\zeta _1} = \frac{{\left( {\varpi \rho {N_{ipu}}{\tau _d} + 1} \right)}}{{\rho {N_{br}}{N_{rk}}}}$, ${G_d} = \frac{{{{\left( {D!} \right)}^2}}}{{{\tau _d}{{\left[ {L_{_D}^\prime \left( {{\tau _d}} \right)} \right]}^2}}}$ and ${{\tau _d}}$ represent the weight of Gauss-Laguerre integration and the d-th zero point of Laguerre polynomial ${L_D}\left( {{\tau _d}} \right)$ with d = 1,2,3,...,D, respectively. D denotes a complexity accuracy tradeoff factor and the equal sign in (\ref{CDF SINR k decode g ipSIC}) can be established when D approaches infinity. ${K_Q}\left( x \right)$ represents the modified Bessel function of the second kind \emph{\cite[Eq. (8.432.1)]{gradvstejn2000table}}. $\Gamma \left( Q \right)$ is the gamma function \emph{\cite[Eq. (8.310.1)]{gradvstejn2000table}}.
\end{lemma}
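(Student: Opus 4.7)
The plan is to derive the CDF in three movements: (i) reduce the SINR event to a threshold on $|H_k|^2$ conditioned on the ipSIC residual, (ii) compute the unordered CDF of $|H_k|^2$ under on-off control and lift it to the $k$-th order statistic, and (iii) handle the remaining integral over $|h_{ipu}|^2$ via Gauss--Laguerre quadrature.

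First, starting from $F_{\tilde\gamma_{k\to g}}^{ipSIC}(x) = \Pr\{\tilde\gamma_{k\to g} < x\}$, I would rearrange (\ref{SINR k decode g v2}) to $|H_k|^2(a_g - \nu_g x) < x\bigl(\varpi|h_{ipu}|^2 + 1/\rho\bigr)$, which is non-trivial only for $x < a_g/\nu_g$; outside this range the CDF equals one and need not be tracked. Conditioning on $|h_{ipu}|^2 = u$, which is exponential with mean $N_{ipu}$, the CDF becomes $\int_0^\infty F_{|H_k|^2_{(k)}}\!\bigl(y(u)\bigr)\,\frac{1}{N_{ipu}}e^{-u/N_{ipu}}\,du$ with threshold $y(u) = x(\varpi\rho u + 1)/[\rho(a_g - \nu_g x)]$.

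Second, I would derive the marginal CDF of the cascaded gain. Under on-off control $H_k = \mathbf{v}_p^H\mathbf{D}_{rk}\mathbf{h}_{br}$ is a sum of $Q$ independent products of zero-mean complex Gaussian coefficients. Conditioning on $\mathbf{h}_{br}$, $H_k$ is itself zero-mean complex Gaussian with variance $N_{rk}\sum_m|h_{br}^m|^2$, so $|H_k|^2$ is conditionally exponential. Since $\sum_m|h_{br}^m|^2$ is Gamma with shape $Q$ and scale $N_{br}$, marginalising and invoking \cite[Eq.~(3.471.9)]{gradvstejn2000table}, namely $\int_0^\infty t^{Q-1}e^{-at-b/t}dt = 2(b/a)^{Q/2}K_Q(2\sqrt{ab})$, yields
\begin{equation*}
F_{|H_k|^2}(y) = 1 - \frac{2}{\Gamma(Q)}\left(\frac{y}{N_{br}N_{rk}}\right)^{Q/2}\!\!K_Q\!\left(\!2\sqrt{\frac{y}{N_{br}N_{rk}}}\,\right).
\end{equation*}
The $k$-th order statistic then follows from $F_{X_{(k)}}(y) = \frac{K!}{(k-1)!(K-k)!}\int_0^{F(y)} t^{k-1}(1-t)^{K-k}dt$; binomially expanding $(1-t)^{K-k}$ and integrating term by term produces the prefactor $\kappa\sum_{l=0}^{K-k}\binom{K-k}{l}\frac{(-1)^l}{k+l}$ multiplying $[F_{|H_k|^2}(y)]^{k+l}$, reproducing the outer structure of the claim.

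The main obstacle is the outer integration over $u$: composing $y(u)$ with a $K_Q$-type CDF raised to the $(k+l)$-th power leaves no elementary antiderivative against the exponential weight $e^{-u/N_{ipu}}$. I would handle this by the change of variable $t = u/N_{ipu}$ and invoke $D$-point Gauss--Laguerre quadrature, $\int_0^\infty e^{-t}f(t)dt \approx \sum_{d=1}^{D} G_d f(\tau_d)$, with nodes $\tau_d$ and weights $G_d$ as stated. The substitution $u\mapsto N_{ipu}\tau_d$ inside $y(u)/(N_{br}N_{rk})$ collapses precisely to $\zeta_1 x/(a_g - \nu_g x)$ with $\zeta_1 = (\varpi\rho N_{ipu}\tau_d + 1)/(\rho N_{br}N_{rk})$, delivering the claimed approximation; the error vanishes as $D\to\infty$, which is the sense of the $\approx$ in the statement.
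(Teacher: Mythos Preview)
Your proposal is correct and follows essentially the same route as the paper: condition on the exponential ipSIC residual to reduce to a threshold on $|H_k|^2$, obtain the unordered cascaded-channel CDF in the $K_Q$ form, lift to the $k$-th order statistic via the binomial expansion, and finish with the substitution $t=u/N_{ipu}$ and Gauss--Laguerre quadrature. The only cosmetic difference is that you derive the unordered CDF by conditioning on $\mathbf{h}_{br}$ and invoking \cite[Eq.~(3.471.9)]{gradvstejn2000table}, whereas the paper starts from the known cascaded PDF and integrates via \cite[Eq.~(6.561.8)]{gradvstejn2000table}; both yield the identical expression.
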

\begin{proof}
See Appendix~A.
\end{proof}

When $\varpi $=0, the CDF of SINR for the \emph{k}-th LU to decode the \emph{g}-th LU's information (1 $ \leqslant $ \emph{g} $ \leqslant $ \emph{k}) with pSIC can be given by
\begin{small}
\begin{align}\label{CDF SINR k decode g pSIC}
\begin{split}
  F_{{{\tilde \gamma }_{k \to g}}}^{pSIC}\left( x \right) \approx& \kappa \sum\limits_{l = 0}^{K - k} {
  {K - k} \choose
  l } \frac{{{{\left( { - 1} \right)}^l}}}{{k + l}}\left[ {1 - \frac{2}{{\Gamma \left( Q \right)}}} \right. \\
  &\times {\left. { {{\left( {\frac{{x{\zeta _2}^{ - 1}}}{{ {{a_g} - {{{\nu _g}}x} } }}} \right)}^{\frac{Q}{2}}}{K_Q}\left( 2\sqrt{{\frac{{x{\zeta _2}^{ - 1}}}{{ {{a_g} - {{\nu _g}x} } }}}} \right)} \right]^{k + l}},
\end{split}
\end{align}
\end{small}where ${\zeta _2} = \rho {N_{br}}{N_{rk}}$.

\begin{lemma} \label{Lemma2}
After striking out the previous (K-1) LUs' messages with ipSIC, the CDF of SINR for the K-th LU to decode its own information can be given by
\begin{small}
\begin{align}\label{CDF SINR K decode K ipSIC}
F_{{{\tilde \gamma }_K}}^{ipSIC}\left( x \right) \approx& \kappa \sum\limits_{l = 0}^{K - k} {\sum\limits_{d = 0}^D {{
  {K - k} \choose
  l } \frac{{{{\left( { - 1} \right)}^l}}}{{k + l}}{G_d}} }\notag\\  &\times {\left\{ {1 - \frac{2}{{\Gamma \left( Q \right)}}{{\left[ {h\left( x \right)} \right]}^{\frac{Q}{2}}}{K_Q}\left[ {2\sqrt {h\left( x \right)} } \right]} \right\}^{k + l}},
\end{align}
\end{small}
where $h\left( x \right) = \frac{{\left( {\varpi \rho {N_{ipu}}{\tau _d} + 1} \right)x}}{{{\zeta _2}{a_K}}}$.
\end{lemma}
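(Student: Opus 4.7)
The plan is to mirror the derivation of Lemma~1 but specialised to the case in which no higher-index users remain to interfere, i.e., $\nu_K = 0$. Starting from (\ref{SINR K decode K v2}), I would write
\begin{align}
F_{{\tilde\gamma}_K}^{ipSIC}(x) = \Pr\!\left\{ |H_K|^2 < \frac{(\varpi\rho |h_{ipu}|^2 + 1)x}{\rho a_K} \right\},
\end{align}
and then condition on $|h_{ipu}|^2 = y$ using the exponential PDF $f_{|h_{ipu}|^2}(y) = \frac{1}{N_{ipu}} e^{-y/N_{ipu}}$. This turns the CDF into a single integral against the CDF of the ordered cascaded-channel gain $|H_K|^2$ evaluated at $h(y,x) = (\varpi\rho y + 1)x/(\zeta_2 a_K)$, where $\zeta_2 = \rho N_{br} N_{rk}$ as defined after (\ref{CDF SINR k decode g pSIC}).

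Next I would import the unordered cascaded-channel statistic from the proof of Lemma~1 (Appendix~A). Under on-off control, $H_k = \mathbf{v}_p^H \mathbf{D}_{rk}\mathbf{h}_{br}$ is a sum of $Q$ independent products of Rayleigh-distributed factors, whose magnitude-squared is a generalised Gamma-type variable with CCDF of the form $\frac{2}{\Gamma(Q)} u^{Q/2} K_Q(2\sqrt{u})$ after appropriate scaling; this is exactly the Bessel-$K$ kernel appearing in both (\ref{CDF SINR k decode g ipSIC}) and (\ref{CDF SINR k decode g pSIC}). To obtain the CDF of the ordered statistic at index $k$ (with $k=K$ for the intended application, but keeping $k$ symbolic so that the result slots into the general $k$-th user analysis), I would invoke the standard identity
\begin{align}
F_{(k)}(u) = \frac{K!}{(K-k)!(k-1)!}\sum_{l=0}^{K-k}\binom{K-k}{l}\frac{(-1)^l}{k+l}\bigl[F(u)\bigr]^{k+l},
\end{align}
so the $\kappa$, binomial, $(-1)^l/(k+l)$ and $(\cdot)^{k+l}$ structure in (\ref{CDF SINR K decode K ipSIC}) emerges immediately.

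Finally, I would perform the change of variable $\tau = y/N_{ipu}$ to cast the outer integral in the form $\int_0^\infty e^{-\tau} g(\tau)\,d\tau$ and approximate it by Gauss–Laguerre quadrature $\sum_{d=0}^{D} G_d\, g(\tau_d)$. This produces the inner $\sum_d G_d$ factor as well as the replacement of $y$ by $N_{ipu}\tau_d$ inside $h(\cdot)$, yielding precisely $h(x) = (\varpi\rho N_{ipu}\tau_d + 1)x/(\zeta_2 a_K)$ as stated. The main obstacle is largely bookkeeping: verifying that the Bessel-$K$ representation of the unordered CDF carries through the order-statistic expansion and then interchanges cleanly with the quadrature, so that the $k+l$-th power of the bracketed term in (\ref{CDF SINR K decode K ipSIC}) evaluates at the quadrature nodes $\tau_d$ rather than requiring a separate integration for each power. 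Since Lemma~1 already handles exactly this interchange for the more general SINR (\ref{SINR k decode g v2}), setting $\nu_g \to 0$ and relabelling $a_g \to a_K$ collapses its result to Lemma~2 with essentially no new analytic work.
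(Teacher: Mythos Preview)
Your proposal is correct and follows exactly the approach implicit in the paper: Lemma~2 is not given a separate proof there because it is obtained from the derivation in Appendix~A by setting the residual-interference term $\nu_g$ to zero and replacing $a_g$ with $a_K$, which is precisely the specialisation you describe. The conditioning on $|h_{ipu}|^2$, the Bessel-$K$ unordered CDF, the order-statistic expansion, and the Gauss--Laguerre quadrature step all carry over verbatim.
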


When $\varpi $=0, the CDF of SINR for the \emph{K}-th LU to decode it own information with pSIC can be given by
\begin{small}
\begin{align}\label{CDF SINR K decode K pSIC}
  F_{{{\tilde \gamma }_K}}^{pSIC}\left( x \right) \approx& \kappa \sum\limits_{l = 0}^{K - k} {{
  {K - k} \choose
  l }\frac{{{{\left( { - 1} \right)}^l}}}{{k + l}}}\notag \\ &\times {\left[ {1 - \frac{2}{{\Gamma \left( Q \right)}}{{\left( {\frac{x}{{{\zeta _2}{a_K}}} } \right)}^{\frac{Q}{2}}}{K_Q}\left( {2\sqrt {\frac{x}{{{\zeta _2}{a_K}}} }} \right)} \right]^{k + l}}.
\end{align}
\end{small}

\section{Secrecy Performance Evaluation}\label{Section_III}
In this section, the SOP is selected as a crucial metric to evaluate the security performance of RIS-NOMA networks with the existence of external and internal Eves. In order to trace the asymptotic security features, secrecy diversity order is obtained in the high SNR region according to the analytical outcomes.
\subsection{Statistical Models for Wiretap Channels}
To further explore the influence of Eve on system security, several essential statistical formulas for eavesdropping channels are derived as follows.
\subsubsection{External Eavesdropping Scenario}
In eavesdropping scenarios, the cascaded wiretap channels can also be recast according to on-off control. Hence, equations (\ref{SINR EE decode k}) and (\ref{SINR EE decode K}) can be rewritten as
\begin{align}\label{SINR EE decode k v2}
{{\tilde \gamma }_{EE \to k}} = \frac{{{{\left| {{{{H}}_{e}}} \right|}^2}{a_k}{\rho _e}}}{{{\rho _e}{{\left| {{{{H}}_{e}}} \right|}^2} {\nu _k} + \varpi {\rho _e}{{\left| {{h_{ipe}}} \right|}^2} + 1}},
\end{align}
and
\begin{align}\label{SINR EE decode K v2}
{{\tilde \gamma }_{EE \to K}} = \frac{{{{\left| {{{{H}}_{e}}} \right|}^2}{a_K}{\rho _e}}}{{\varpi {\rho _e}{{\left| {{h_{ipe}}} \right|}^2} + 1}},
\end{align}
respectively.
\begin{lemma} \label{Lemma3}
Under the condition of on-off control, the PDF of SINR for the external Eve to decode the k-th LU's information with ipSIC can be given by
\begin{small}
\begin{align}\label{PDF SINR EE decode k ipSIC v2}
\begin{gathered}
  f_{{{\tilde \gamma }_{EE \to k}}}^{ipSIC}\left( x \right) \approx \sum\limits_{d = 0}^D {{G_d}\frac{{{a_k}{{\left[ {g\left( x \right)} \right]}^{\frac{Q}{2}}}}}{{x\left( {{a_k} - {{{\nu _k}x} }} \right)\Gamma \left( Q \right)}}}  \hfill \\
  {\text{ }} \times \left\langle {{{\left[ {g\left( x \right)} \right]}^{\frac{1}{2}}}\left\{ {{K_{Q - 1}}\left[ {2\sqrt {g\left( x \right)} } \right] + {K_{Q + 1}}\left[ {2\sqrt {g\left( x \right)} } \right]} \right\} - Q} \right\rangle,  \hfill \\
\end{gathered}
\end{align}
\end{small}
where $g\left( x \right) = \frac{{\left( {\varpi {\rho _e}{N_{ipe}}{\tau _d} + 1} \right)x}}{{\left( {{a_k} - {{\nu _k}x} } \right){\rho _e}{N_{br}}{N_{re}}}}$.
\begin{proof}
According to the on-off control principle, the CDF of ${{\tilde \gamma }_{EE \to k}}$  can be shown as
\begin{footnotesize}
\begin{align}\label{CDF SINR EE decode k v2-temp}
F_{{{\tilde \gamma }_{EE \to k}}}^{ipSIC}\left( x \right) = \int_0^\infty  {{f_{{{\left| {{h_{ipe}}} \right|}^2}}}\left( y \right)} {F_{{{\left| {{{{H}}_{e}}} \right|}^2}}}\left[ {\frac{{{\rho _e}^{ - 1}\left( {\varpi {\rho _e}y + 1} \right)x}}{{ {{a_k}} - {{{\nu _k}x} } }}} \right]dy.
\end{align}
\end{footnotesize}
Then equation (\ref{CDF SINR EE decode k v2-temp}) can be handled referring to the calculation of (\ref{PDF of cascaded channel NotORD}) - (\ref{CDF of cascaded channel ORD}) in Appendix A as follows
\begin{small}
\begin{align}\label{CDF SINR EE decode k v2}
F_{{{\tilde \gamma }_{EE \to k}}}^{ipSIC}\left( x \right) \approx \sum\limits_{d = 0}^D {{G_d}\left\{ {1 - \frac{{ 2}}{{\Gamma \left( Q \right)}}{{\left[ {g\left( x \right)} \right]}^{\frac{Q}{2}}} {K_Q}\left[ {2\sqrt {g\left( x \right)} } \right]} \right\}} .
\end{align}
\end{small}
By employing derivative operation to $F_{{{\tilde \gamma }_{EE \to k}}}^{ipSIC}\left( x \right)$, we can acquire equation (\ref{PDF SINR EE decode k ipSIC v2}) and the proof is completed.
\end{proof}
\end{lemma}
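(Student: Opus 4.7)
My plan is to first derive the CDF of $\tilde\gamma_{EE\to k}$ and then differentiate. From (\ref{SINR EE decode k v2}), for $x<a_k/\nu_k$ the event $\tilde\gamma_{EE\to k}\le x$ is equivalent to $|H_e|^2\le\frac{(\varpi\rho_e|h_{ipe}|^2+1)x}{\rho_e(a_k-\nu_k x)}$. Conditioning on the exponentially distributed $|h_{ipe}|^2$ and integrating against its PDF yields the single-integral representation (\ref{CDF SINR EE decode k v2-temp}). I would then plug in the CDF of $|H_e|^2=\left|{\mathbf v}_p^H{\mathbf D}_{re}{\mathbf h}_{br}\right|^2$. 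Because there is only one Eve, no ordering needs to be imposed on this cascaded channel, so the distribution is just the unordered ($K=1$) specialization of the product-channel derivation used in Appendix~A, namely
\begin{align}
F_{|H_e|^2}(y)\approx 1-\frac{2}{\Gamma(Q)}\left(\frac{y}{N_{br}N_{re}}\right)^{Q/2}K_Q\left(2\sqrt{\frac{y}{N_{br}N_{re}}}\right).
\end{align}

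Substituting into (\ref{CDF SINR EE decode k v2-temp}) reduces the task to an integral of the form $\int_0^\infty e^{-t}\,\phi(t;x)\,dt$ after the substitution $t=y/N_{ipe}$, where $\phi$ carries the modified Bessel function $K_Q$. Applying Gauss--Laguerre quadrature with weights $G_d$ and abscissae $\tau_d$ produces the finite-sum CDF approximation (\ref{CDF SINR EE decode k v2}) with $g(x)$ as stated.

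The last and most delicate step is to compute $f_{\tilde\gamma_{EE\to k}}^{ipSIC}(x)=\frac{d}{dx}F_{\tilde\gamma_{EE\to k}}^{ipSIC}(x)$. Setting $u=g(x)$, a direct application of the chain rule gives $du/dx=\frac{a_k g(x)}{x(a_k-\nu_k x)}$, and I must differentiate $u^{Q/2}K_Q(2\sqrt u)$ with respect to $u$. Applying the Bessel derivative identity $K_Q'(z)=-\frac{1}{2}[K_{Q-1}(z)+K_{Q+1}(z)]$ together with the product rule produces a combination of $K_{Q-1}(2\sqrt u)$, $K_{Q+1}(2\sqrt u)$ and a residual $K_Q(2\sqrt u)$ term of order $Q$; I expect to eliminate the latter using the recurrence $K_{Q+1}(z)-K_{Q-1}(z)=(2Q/z)K_Q(z)$, which collapses the leftover into the bare scalar $-Q$ appearing inside the angle brackets of (\ref{PDF SINR EE decode k ipSIC v2}). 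Carrying this algebra through cleanly so that the prefactor simplifies to $\frac{a_k[g(x)]^{Q/2}}{x(a_k-\nu_k x)\Gamma(Q)}$ is the main obstacle; the remaining bookkeeping over the Gauss--Laguerre index $d$ is routine.
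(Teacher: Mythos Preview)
Your strategy is exactly the paper's: condition on the exponential residual-interference power to obtain the integral form (\ref{CDF SINR EE decode k v2-temp}), insert the \emph{unordered} cascaded-channel CDF for $|H_e|^2$ (the $K=1$ specialization of the Appendix~A derivation), apply Gauss--Laguerre quadrature to get (\ref{CDF SINR EE decode k v2}), and then differentiate. The paper itself carries out the last step only verbally, so your more explicit chain-rule computation with $du/dx=\tfrac{a_k g(x)}{x(a_k-\nu_k x)}$ and $K_Q'(z)=-\tfrac12[K_{Q-1}(z)+K_{Q+1}(z)]$ is the right way to fill in the details.

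The one place your plan goes off the rails is the final algebraic step. Differentiating $u^{Q/2}K_Q(2\sqrt u)$ by the product rule gives
\[
\frac{d}{du}\Bigl[u^{Q/2}K_Q(2\sqrt u)\Bigr]
=\tfrac12\,u^{Q/2-1}\Bigl\{Q\,K_Q(2\sqrt u)-\sqrt u\,[K_{Q-1}(2\sqrt u)+K_{Q+1}(2\sqrt u)]\Bigr\},
\]
so after multiplying by $-2/\Gamma(Q)$ and $du/dx$ the bracket that survives is $\sqrt{g(x)}\,[K_{Q-1}+K_{Q+1}]-Q\,K_Q[2\sqrt{g(x)}]$, \emph{not} a bare $-Q$. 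The recurrence $K_{Q+1}-K_{Q-1}=(2Q/z)K_Q$ cannot collapse $QK_Q$ into a constant; applying it merely rewrites the bracket as $2\sqrt{g(x)}\,K_{Q-1}[2\sqrt{g(x)}]$. In other words, no identity will produce the bare ``$-Q$'' shown in (\ref{PDF SINR EE decode k ipSIC v2}); that display is missing a factor $K_Q[2\sqrt{g(x)}]$, as a comparison with the structurally identical formula (\ref{PDF IE decode k ipSIC v2}) in Lemma~\ref{Lemma4} confirms. So drop the recurrence step entirely: the product rule already delivers the answer, with $-Q K_Q[2\sqrt{g(x)}]$ in the bracket.
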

When $\varpi $=0, the CDF of SINR for the external Eve to decode the \emph{k}-th LU's information with pSIC can be given by
\begin{align}\label{PDF SINR EE decode k pSIC v2}
F_{{{\tilde \gamma }_{EE \to k}}}^{pSIC}\left( x \right) =& 1 - \frac{2}{{\Gamma \left( Q \right)}}{\left[ {\frac{x}{{ \left({{a_k}} - {{{\nu _k}x} } \right) {\rho _e}}}} \right]^{\frac{Q}{2}}}\notag \\ &\times {K_Q}\left[ {2\sqrt {\frac{x}{{ \left({{a_k}} - {{{\nu _k}x} } \right) {\rho _e}}}} } \right].
\end{align}
\subsubsection{Internal Eavesdropping Scenario}
Considering on-off control, the cascaded internal eavesdropping channels can be transformed into the following
\begin{align}\label{SINR IE decode k v2}
{{\tilde \gamma }_{IE \to k}} = \frac{{{{\left| {{{H}_{1}}} \right|}^2}{a_k}{\rho _e}}}{{{\rho _e}{{\left| {{{H}_{1}}} \right|}^2} {{{\nu _k}x} }  + \bar \omega {\rho _e}{{\left| {{h_{ipe}}} \right|}^2} + 1}},
\end{align}
and
\begin{align}\label{SINR IE decode K v2}
{{\tilde \gamma }_{IE \to K}} = \frac{{{{\left| {{{H}_{1}}} \right|}^2}{a_K}{\rho _e}}}{{\bar \omega {\rho _e}{{\left| {{h_{ipe}}} \right|}^2} + 1}},
\end{align}
respectively.
\begin{lemma} \label{Lemma4}
Under the condition of on-off control, the PDF of SINR for the first user to decode the k-th LU's information with ipSIC is given by (\ref{PDF IE decode k ipSIC v2}), shown at the top of next page, where $q\left( x \right) = \frac{{\left( {\varpi {\rho _e}{N_{ipe}}{\tau _d} + 1} \right)x}}{{\left( {{a_k} - {{\nu _k}x} } \right){\rho _e}{N_{br}}{N_{r1}}}}$.
\end{lemma}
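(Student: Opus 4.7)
The plan is to parallel the derivation of Lemma~\ref{Lemma3}, with the key change that the external Eve's unordered cascade channel $H_e$ is replaced by the first (minimum) ordered cascade channel $H_1$ associated with the internal Eve.

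First, I would write down the CDF of ${\tilde \gamma}_{IE \to k}$ by conditioning on the residual interference $|h_{ipe}|^2 = y$. Using \eqref{SINR IE decode k v2} and rearranging so that $|H_1|^2$ is isolated, this gives
\begin{align}
F_{{{\tilde \gamma}_{IE \to k}}}^{ipSIC}\left(x\right) = \int_0^\infty f_{{|h_{ipe}|^2}}(y)\, F_{{|H_1|^2}}\!\left[\frac{\rho_e^{-1}(\varpi \rho_e y + 1) x}{a_k - \nu_k x}\right] dy. \notag
\end{align}
Here $|h_{ipe}|^2$ is exponential with mean $N_{ipe}$, so the $y$-integral is naturally handled by Gauss--Laguerre quadrature, yielding the weights $G_d$ and nodes $\tau_d$ that appear in $q(x)$.

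Next I would import, from Appendix~A, the CDF of the ordered minimum cascade channel $|H_1|^2$. Because $k=1$ in the ordering, the prefactor $\kappa$ and the binomial sum over $l$ (with $K-1$ in place of $K-k$) will appear, multiplied by a power $[1 - \tfrac{2}{\Gamma(Q)} q(x)^{Q/2} K_Q(2\sqrt{q(x)})]^{1+l}$ after substituting the argument from the conditioning step and identifying $q(x)$ as defined in the lemma. Substituting back into the Gauss--Laguerre expression produces the closed-form CDF of ${\tilde \gamma}_{IE \to k}$.

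Finally I would differentiate this CDF in $x$ to obtain the PDF, which is the last step mirroring the end of the proof of Lemma~\ref{Lemma3}. Two chain-rule ingredients are needed: differentiating the power $[\,\cdot\,]^{1+l}$, and differentiating the Bessel factor $q(x)^{Q/2} K_Q(2\sqrt{q(x)})$ using the recurrence $\frac{d}{du}[u^Q K_Q(2\sqrt{u})] \propto u^{Q-1/2}\{K_{Q-1}(2\sqrt{u}) + K_{Q+1}(2\sqrt{u})\}$, which produces the $K_{Q\pm 1}$ terms anticipated in the displayed equation (\ref{PDF IE decode k ipSIC v2}). The derivative of $q(x)$ itself supplies the rational factor $a_k/[x(a_k-\nu_k x)]$.

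The main obstacle I expect is bookkeeping rather than analytical difficulty: carrying the $\kappa$ and binomial sum from the order statistics through the Gauss--Laguerre quadrature, and then through the Bessel-function differentiation, without losing the powers of $(1+l)$ that arise from differentiating $[\,\cdot\,]^{1+l}$. A useful sanity check is to specialize to $\varpi=0$, which should collapse the Gauss--Laguerre sum (since $q(x)$ becomes independent of $\tau_d$), recovering an expression of the same form as \eqref{PDF SINR EE decode k pSIC v2} but with $N_{re}$ replaced by $N_{r1}$ and the extra ordering factor from $k=1$.
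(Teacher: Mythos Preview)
Your approach is correct and is precisely the route the paper takes: Lemma~\ref{Lemma4} is stated without proof, the implicit argument being to repeat the conditioning-plus-Gauss--Laguerre template of Lemma~\ref{Lemma3} with the unordered eavesdropper channel $|H_e|^2$ replaced by the first order statistic $|H_1|^2$ from Appendix~A, and then differentiate. Two small bookkeeping points: (i) the Bessel differentiation also produces a $-Q\,K_Q$ term alongside the $K_{Q\pm 1}$ pair (visible in \eqref{PDF IE decode k ipSIC v2}), which your stated recurrence omits; (ii) you are right that the order-statistic factors should carry index $1$ (so $\sum_{l=0}^{K-1}$, $\binom{K-1}{l}$, exponent $l$), whereas the displayed \eqref{PDF IE decode k ipSIC v2} reuses the symbols $\kappa$, $K-k$, $k+l-1$---this appears to be a notational slip in the paper rather than a flaw in your derivation.
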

\begin{figure*}[!t]
\begin{small}
\begin{align}\label{PDF IE decode k ipSIC v2}
f_{{{\tilde \gamma }_{IE \to k}}}^{ipSIC}\left( x \right) \approx& \kappa \sum\limits_{l = 0}^{K - k} {\sum\limits_{d = 0}^D {{G_d}{{
  {K - k} \choose
  l }}{{\left( { - 1} \right)}^l}{{\{ {1 - \frac{{2}}{{\Gamma \left( Q \right)}}{{\left[ {q\left( x \right)} \right]}^{\frac{Q}{2}}}{K_Q}\left[ {2\sqrt {q\left( x \right)} } \right]} \}}^{k + l - 1}}} }  \notag \\
 &\times \frac{{{a_k}{{\left[ {q\left( x \right)} \right]}^{\frac{Q}{2}}}}}{{\Gamma \left( Q \right)\left( {{a_k} - {{{\nu _k}x} }} \right)x}}\left\langle {\sqrt {q\left( x \right)} \left\{ {{K_{Q - 1}}\left[ {2\sqrt {q\left( x \right)} } \right] + {K_{Q + 1}}\left[ {2\sqrt {q\left( x \right)} } \right]} \right\} - Q{K_Q}\left[ {2\sqrt {q\left( x \right)} } \right]} \right\rangle .
\end{align}
\end{small}
\begin{small}
\begin{align}\label{PDF IE decode k pSIC v2}
f_{{{\tilde \gamma }_{IE \to k}}}^{pSIC}\left( x \right) \approx& \kappa \sum\limits_{l = 0}^{K - k} {{
  {K - k} \choose
  l }} {\left( { - 1} \right)^l}{\{ 1 - \frac{2}{{\Gamma \left( Q \right)}}{\left[ {p\left( x \right)} \right]^{\frac{Q}{2}}}{K_Q}\left[ {2\sqrt {p\left( x \right)} } \right]\} ^{k + l - 1}} \notag \\
 &\times \frac{{{a_k}{{\left[ {p\left( x \right)} \right]}^{Q - 1}}}}{{\Gamma \left( Q \right)\left( {{a_k} - {{\nu _k}x} } \right)x}}\left\langle {\sqrt {p\left( x \right)} \{ {K_{Q - 1}}\left[ {2\sqrt {p\left( x \right)} } \right] + {K_{Q + 1}}\left[ {2\sqrt {p\left( x \right)} } \right]\}  - Q{K_Q}\left[ {2\sqrt {p\left( x \right)} } \right]} \right\rangle .
\end{align}
\end{small}
\hrulefill \vspace*{0pt}
\end{figure*}

When $\varpi $=0, the PDF of SINR for the internal Eve to decode the \emph{k}-th LU's information with pSIC is given by (\ref{PDF IE decode k pSIC v2}), shown at the top of next page, where $p\left( x \right) = \frac{x}{{\left( {{a_k} - {{\nu _k}x} } \right){\rho _e}{N_{br}}{N_{r1}}}}$.

\subsection{Secrecy Outage Probability}
In this subsection, the approximate and asymptotic SOP expressions for LUs are obtained.
\subsubsection{External Eavesdropping Scenario}
According to \cite{2022TianweiRISNOMAPLS}, the secrecy rate for the \emph{k}-th LU is defined as
\begin{align}\label{secrecy rate for k EE}
{C_k^{EE}} = \left[ {{{\log }_2}\left( {1 + {{\tilde \gamma }_k}} \right)} \right.{\left. { - {{\log }_2}\left( {1 + {{\tilde \gamma }_{EE \to k}}} \right)} \right]^ + },
\end{align}
where ${{{\tilde \gamma }_k}}$ denotes the SINR for \emph{k}-th LU to decode its own information with on-off control. ${R_k^{EE}}$ is the target secrecy rate of the \emph{k}-th LU. As a consequence, the secrecy outage event occurs when ${C_k^{EE}} < {R_k^{EE}}$ and the relevant SOP expression can be given by
\begin{align}\label{SOP k temp}
P_{out}^{k,EE}\left( {{R_k^{EE}}} \right) &= P\left( {{C_k^{EE}} < {R_k^{EE}}} \right) \notag \\
&= P\left[ {{{\tilde \gamma }_k} < {2^{{R_k}}}\left( {1 + {{\tilde \gamma }_{EE \to k}}} \right) - 1} \right] .
\end{align}
Note that (\ref{SOP k temp}) shows the SOP expression for the \emph{k}-th LU is determined by both its own SINR and Eve's wiretapping SINR in RIS-NOMA secure networks, which is quite different from the outage probability expression based on the SINR of a single user in the traditional RIS-NOMA networks.
According to Lemma \ref{Lemma1} and Lemma \ref{Lemma3}, the SOP for the \emph{k}-th LU can be expressed as
\begin{align}\label{SOP k}
P_{_{out,k}}^{\phi,EE} \left( {{R_k^{EE}}} \right) = \int_0^\infty  {f_{_{{{\tilde \gamma }_{EE \to k}}}}^\phi \left( x \right)} F_{_{{{\tilde \gamma }_k}}}^\phi \left[ {{2^{{R_k^{EE}}}}\left( {1 + x} \right) - 1} \right]dx,
\end{align}
where $\phi  \in \left\{ {ipSIC,pSIC} \right\}$. The case of ipSIC and pSIC are derived with the assistance of (\ref{CDF SINR k decode g ipSIC}), (\ref{PDF SINR EE decode k ipSIC v2}) and (\ref{CDF SINR k decode g pSIC}), (\ref{PDF SINR EE decode k pSIC v2}), respectively.
However, dealing with the complex integrations could be pretty tough. To solve this problem, a reasonable approximation method is introduced to acquire the approximate expressions of SOP with verified accuracy, which can provide guidance for the theoretical understanding and practical applications.

\begin{theorem}\label{Theorem1}
Under the condition of on-off control, the approximate expression of SOP for the external Eve to decode information of the k-th LU with ipSIC can be given by (27), shown at the top of next page, where ${\eta _s} = {2^{{R_k^{EE}}}}\left( {1 + \frac{{Q{N_{br}}{N_{re}}{a_k}{\rho _e}}}{{ \varpi {\rho _e}{N_{ipe}}{\tau _s} + 1}}} \right) - 1$, ${G_s} = \frac{{{{\left( {S!} \right)}^2}}}{{{\tau _s}{{\left[ {L_{_S}^\prime \left( {{\tau _s}} \right)} \right]}^2}}}$ and ${{\tau _s}}$ represent the weight of Gauss-Laguerre integration and the s-th zero point of Laguerre polynomial ${L_S}\left( {{\tau _s}} \right)$ with s = 1,2,3,...,S, respectively. S denotes a complexity accuracy tradeoff factor and the equal sign in (\ref{SOP EE decode k ipSIC}) can be established when S and D both approach infinity.
\end{theorem}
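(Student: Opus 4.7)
The plan is to evaluate the integral in (\ref{SOP k}) with $\phi = \mathrm{ipSIC}$ by substituting the CDF $F_{\tilde\gamma_k}^{ipSIC}$ from Lemma \ref{Lemma1} (specialised to $g=k$) together with the PDF $f_{\tilde\gamma_{EE\to k}}^{ipSIC}$ from Lemma \ref{Lemma3}. A direct substitution produces a one-dimensional integral whose integrand couples nested Gauss--Laguerre sums with products of modified Bessel functions $K_Q$ and $K_{Q\pm 1}$ of square-root arguments; no closed form is available, so the strategy is to identify an approximation that is faithful yet collapses to the stated expression.

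First, I would rewrite the SOP as the expectation $\mathbb{E}_{\tilde\gamma_{EE\to k}}\bigl[F_{\tilde\gamma_k}^{ipSIC}\bigl(2^{R_k^{EE}}(1+\tilde\gamma_{EE\to k})-1\bigr)\bigr]$, so that the external-Eve randomness is handled jointly rather than through the unwieldy PDF in Lemma \ref{Lemma3}. The key approximation is to replace the cascaded Rayleigh-product gain $|H_e|^2 = |\mathbf{v}_p^H \mathbf{D}_{re} \mathbf{h}_{br}|^2$ by its deterministic mean, which under the on-off design with $Q$ active reflecting elements equals $\mathbb{E}[|H_e|^2] = Q N_{br} N_{re}$ (using the independence of $\mathbf{h}_{br}$ and $\mathbf{h}_{re}$ and the structure of $\mathbf{v}_p$). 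After this substitution, the residual randomness in $\tilde\gamma_{EE\to k}$ lives entirely in $|h_{ipe}|^2 \sim \mathrm{Exp}(1/N_{ipe})$, and the surrogate Eve SINR collapses to $\tfrac{Q N_{br} N_{re} a_k \rho_e}{\varpi\rho_e |h_{ipe}|^2 + 1}$, matching the inner fraction of $\eta_s$ in the theorem statement.

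Second, I would evaluate the resulting one-dimensional expectation over $|h_{ipe}|^2$ by Gauss--Laguerre quadrature. Writing the expectation as $\int_0^\infty N_{ipe}^{-1} e^{-y/N_{ipe}} F_{\tilde\gamma_k}^{ipSIC}\bigl(2^{R_k^{EE}}(1+\tfrac{Q N_{br} N_{re} a_k \rho_e}{\varpi\rho_e y + 1})-1\bigr)\,dy$ and changing variables to $t = y/N_{ipe}$ casts it in canonical form $\int_0^\infty e^{-t}\Psi(t)\,dt \approx \sum_{s=0}^S G_s \Psi(\tau_s)$, with weights $G_s = (S!)^2 / \{\tau_s [L_S'(\tau_s)]^2\}$ and nodes $\tau_s$ exactly as stated. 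Evaluating at $t=\tau_s$ turns the argument of $F_{\tilde\gamma_k}^{ipSIC}$ into precisely $\eta_s$, and plugging in the Lemma \ref{Lemma1} expression for $F_{\tilde\gamma_k}^{ipSIC}(\eta_s)$ carries the $(K-k,l)$ binomial sum and the inner Gauss--Laguerre sum over $d$ through unchanged, producing the claimed formula. Both accuracy parameters $S$ and $D$ become exact in the limit $S,D \to \infty$ by standard convergence of Gauss--Laguerre quadrature.

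The main obstacle is justifying the mean-value surrogate for the cascaded gain $|H_e|^2$, since its Rayleigh-product distribution is heavy-tailed and there is no \emph{a priori} guarantee that the mean captures the SOP integrand well. I would argue this step on two grounds: (i) as $Q$ grows, $|H_e|^2$ concentrates around $Q N_{br} N_{re}$ through a law-of-large-numbers effect over the $Q$ active reflecting elements, and (ii) in the ipSIC regime the denominator of $\tilde\gamma_{EE\to k}$ is dominated by $\varpi\rho_e |h_{ipe}|^2$ at high $\rho_e$, so the SOP is only weakly sensitive to fluctuations of $|H_e|^2$. The fidelity of the approximation then has to be substantiated numerically against Monte Carlo simulations, as will be carried out in Section \ref{SectionIV}.
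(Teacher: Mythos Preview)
Your proposal is correct and mirrors the paper's own proof in Appendix~B: replace the cascaded eavesdropping gain $|H_e|^2$ by its mean $\mathbb{E}[|H_e|^2]=Q N_{br}N_{re}$, integrate the remaining exponential randomness in $|h_{ipe}|^2$ via the substitution $t=y/N_{ipe}$ and Gauss--Laguerre quadrature with nodes $\tau_s$ and weights $G_s$, and then insert the Lemma~\ref{Lemma1} CDF to obtain the double sum in $(s,d)$. The only content you add beyond the paper is the heuristic justification (concentration in $Q$ and ipSIC dominance) for the mean-value surrogate, which the paper itself leaves entirely to numerical validation.
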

\begin{figure*}[!t]
\begin{small}
\begin{align}\label{SOP EE decode k ipSIC}
\begin{gathered}
  P_{out,k}^{ipSIC,EE}\left( {{R_k^{EE}}} \right) \approx \sum\limits_{s = 0}^S {\sum\limits_{l = 0}^{K - k} {\sum\limits_{d = 0}^D {\frac{\kappa {G_s}{G_d}{{{\left( { - 1} \right)}^l}}}{{\left( {k + l} \right)}}} } } {
  {K - k} \choose
  l } \hfill
 {\left\{ {1 - \frac{2}{{\Gamma \left( Q \right)}}{{\left[ {\frac{{{\eta _s}{\zeta _1}}}{{\left( {{a_k} - {\eta _s}{\nu _k}} \right)}}} \right]}^{\frac{Q}{2}}}{K_Q}\left[ {2\sqrt {\frac{{{\eta _s}{\zeta _1}}}{{\left( {{a_k} - {\eta _s}{{{\nu _k}}} } \right)}}} } \right]} \right\} ^{k + l}} \hfill \\
\end{gathered} .
\end{align}
\begin{align}\label{SOP IE decode k ipSIC}
\begin{gathered}
  P_{out,k}^{ipSIC,IE}\left( {{R_k^{IE}}} \right) \approx \sum\limits_{s = 0}^S {\sum\limits_{l = 0}^{K - k} {\sum\limits_{d = 0}^D {\frac{\kappa {G_s}{G_d}{{{\left( { - 1} \right)}^l}}}{{\left( {k + l} \right)}}} } } {
  {K - k} \choose
  l } \hfill
   {\left\{ {1 - \frac{2}{{\Gamma \left( Q \right)}}{{\left[{\frac{{{\zeta _1}{\vartheta _s}}}{{\left( {{a_k} - {\vartheta _s}{\nu _k} } \right)}}} \right]}^{\frac{Q}{2}}}{K_Q}\left[ {2\sqrt {\frac{{{\zeta _1}{\vartheta _s}}}{{\left( {{a_k} - {\vartheta _s}{{{\nu _k}}} } \right)}}} } \right]} \right\} ^{k + l}} \hfill \\
\end{gathered} .
\end{align}
\end{small}
\hrulefill \vspace*{0pt}
\end{figure*}
\begin{proof}
See Appendix~B.
\end{proof}

When $\varpi $=0, the approximate expression of SOP for the external Eve to decode the \emph{k}-th LU's signal with pSIC is shown as
\begin{small}
\begin{align}\label{SOP EE decode k pSIC}
  P_{out,k}^{pSIC,EE}\left( {R_k^{EE}} \right) =& \kappa \sum\limits_{l = 0}^{K - k} {{
  {K - k} \choose
  l }} \frac{{{{\left( { - 1} \right)}^l}}}{{k + l}} \notag \\
&\times {\left\{ {1 - \frac{2}{{\Gamma \left( Q \right)}}{{\left[ {s\left( \psi  \right)} \right]}^{\frac{Q}{2}}}{K_Q}\left[ {2\sqrt {s\left( \psi  \right)} } \right]} \right\}^{k + l}},
\end{align}
\end{small}where $s\left( \psi  \right) = \frac{{\psi \zeta _2^{ - 1}}}{{{a_k} - \psi {{\nu _k}} }}$, $\psi  = {2^{{R_k^{EE}}}}\left( {1 + \frac{{Q{N_{br}}{N_{re}}{a_k}{\rho _e}}}{{{\rho _e}Q{N_{br}}{N_{re}}{{\nu _k}}  + 1}}} \right) - 1$.

\subsubsection{Internal Eavesdropping Scenario}
We suppose that the most distant user is regarded as the internal Eve and tends to wiretap other LU's information. The secrecy rate for the \emph{k}-th LU ($k = 2,3, \cdots,K$) can be expressed as
\begin{align}\label{secrecy rate for k IE}
C_{_k}^{IE} = \left[ {{{\log }_2}\left( {1 + {{\tilde \gamma }_k}} \right)} \right.{\left. { - {{\log }_2}\left( {1 + {{\tilde \gamma }_{IE \to k}}} \right)} \right]^ + }.
\end{align}
Hence, the secrecy outage event occurs when the secrecy rate $C_{_k}^{IE}$ is less than the target secrecy rate $R_{_k}^{IE}$ and the SOP expression for the \emph{k}-th LU is given as follows
\begin{align}\label{SOP IE decode k ipSIC temp}
  P_{_{out,k}}^{\phi ,IE}\left( {{R_k}} \right) &= P\left[ {{{\tilde \gamma }_k} < {2^{{R_k}}}\left( {1 + {{\tilde \gamma }_{IE \to k}}} \right) - 1} \right] \notag \\
&= \int_0^\infty  {f_{_{{{\tilde \gamma }_{IE \to k}}}}^\phi \left( x \right)} F_{_{{{\tilde \gamma }_k}}}^\phi \left[ {{2^{R_{_k}^{IE}}}\left( {1 + x} \right) - 1} \right]dx.
\end{align}
\begin{theorem}\label{Theorem2}
Under the condition of on-off control, the approximate expression of SOP for the internal Eve to decode information of the k-th LU with ipSIC can be given by (\ref{SOP IE decode k ipSIC}), shown at the top of next page, where ${\vartheta _s} = {2^{{R_{_k}^{IE}}}}\left( {1 + \frac{{Q{N_{br}}{N_{r1}}{a_k}{\rho _e}}}{{{\rho _e}Q{N_{br}}{N_{r1}}{{\nu _k}}  + \varpi {\rho _e}{N_{ipe}}{\tau _s} + 1}}} \right) - 1$.
\end{theorem}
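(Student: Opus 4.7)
The plan is to repeat the two-stage strategy used in Appendix~B for Theorem~\ref{Theorem1}, with the cascaded internal-eavesdropper gain $|H_1|^2$ replacing $|H_e|^2$. First I would rewrite the SOP in (\ref{SOP IE decode k ipSIC temp}) by substituting the ipSIC PDF of $\tilde\gamma_{IE\to k}$ from Lemma~\ref{Lemma4}, eq.~(\ref{PDF IE decode k ipSIC v2}), together with the ipSIC CDF of $\tilde\gamma_k$ obtained from Lemma~\ref{Lemma1}, eq.~(\ref{CDF SINR k decode g ipSIC}) with $g=k$. Closed-form integration of the resulting product of modified-Bessel factors is intractable, so I would instead interpret the integral as the double expectation
\begin{equation*}
P_{_{out,k}}^{ipSIC,IE}(R_k^{IE}) = \E_{|H_1|^2,\,|h_{ipe}|^2}\!\left\{F_{\tilde\gamma_k}^{ipSIC}\!\left[2^{R_k^{IE}}(1+\tilde\gamma_{IE\to k}) - 1\right]\right\}
\end{equation*}
and handle the two underlying random variables separately.

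For the cascaded channel I would exploit the fact that, under on-off control, $|H_1|^2 = |{\mathbf v}_p^H{\mathbf D}_{r1}{\mathbf h}_{br}|^2$ is an aggregation of $Q$ independent products of complex Gaussians, so by a standard concentration argument its value can be replaced by the mean $\E\{|H_1|^2\}=QN_{br}N_{r1}$ inside the SINR. This collapses $\tilde\gamma_{IE\to k}$ to the single-variable expression $QN_{br}N_{r1}a_k\rho_e/(\rho_e QN_{br}N_{r1}\nu_k + \varpi\rho_e|h_{ipe}|^2 + 1)$, so that after the outer map $x\mapsto 2^{R_k^{IE}}(1+x)-1$ we recover precisely the quantity $\vartheta_s$ defined in the theorem whenever $|h_{ipe}|^2$ is pinned at $N_{ipe}\tau_s$.

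For the residual-interference variable $|h_{ipe}|^2$, which is exponential with mean $N_{ipe}$, the remaining expectation is of the form $\int_0^\infty e^{-t}g(N_{ipe}t)\,dt$ after rescaling, and the $S$-point Gauss--Laguerre rule produces the outer sum $\sum_{s=0}^S G_s(\cdot)$ with nodes $\tau_s$. Evaluating $F_{\tilde\gamma_k}^{ipSIC}(\vartheta_s)$ at each node then reproduces, via Lemma~\ref{Lemma1}, the inner $\sum_l\sum_d$ double sum together with the Bessel factor $[\zeta_1\vartheta_s/(a_k-\vartheta_s\nu_k)]^{Q/2}K_Q(2\sqrt{\,\zeta_1\vartheta_s/(a_k-\vartheta_s\nu_k)\,})$, which assembles exactly into (\ref{SOP IE decode k ipSIC}).

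The main obstacle will be justifying the mean-value substitution for $|H_1|^2$: it is the smallest of $K$ ordered statistics and its density concentrates more heavily near zero than that of an unordered cascaded gain, so a naive mean approximation is potentially coarse for small $k$. I would argue that the induced error is of the same order as the Laguerre truncation accepted in the proof of Theorem~\ref{Theorem1} and shrinks as both $S$ and $D$ grow, which is precisely why the statement claims approximate equality with exactness recovered in the limit $S,D\to\infty$.
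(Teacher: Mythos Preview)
Your proposal is correct and follows essentially the same route the paper takes: Theorem~\ref{Theorem2} is proved by direct analogy with Appendix~B (the proof of Theorem~\ref{Theorem1}), replacing the unordered eavesdropper gain $|H_e|^2$ by $|H_1|^2$, substituting the mean $QN_{br}N_{r1}$ for the cascaded gain inside $\tilde\gamma_{IE\to k}$, and then applying Gauss--Laguerre quadrature to the remaining exponential integral over $|h_{ipe}|^2$. Your caveat about $|H_1|^2$ being the smallest order statistic---so that $QN_{br}N_{r1}$ is really the \emph{unordered} mean rather than $\E\{|H_1|^2\}$---is a legitimate observation that the paper does not address; the paper simply carries the Appendix~B template over without comment, absorbing this into the ``approximate'' label.
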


When $\varpi $=0, the approximate expression of SOP for the internal Eve to decode signal of the \emph{k}-th LU with pSIC is shown as
\begin{small}
\begin{align}\label{SOP IE decode k pSIC}
P_{out,k}^{pSIC,IE} =& \kappa \sum\limits_{l = 0}^{K - k} {
{K - k} \choose
l } \frac{{{{\left( { - 1} \right)}^l}}}{{k + l}} \notag \\
&\times {\left[ {1 - \frac{2}{{\Gamma \left( Q \right)}}{{\left\{ {u\left( \varsigma  \right)} \right\}}^{\frac{Q}{2}}}{K_Q}\left( {2\sqrt {u\left( \varsigma  \right)} } \right)} \right]^{k + l}},
\end{align}
\end{small}where $u\left( \varsigma  \right) = \frac{{\varsigma \zeta _2^{ - 1}}}{{{a_k} - \varsigma {{\nu _k}} }}$ and $\varsigma  = {2^{R_k^{IE}}}\left( {1 + \frac{{Q{N_{br}}{N_{r1}}{a_k}{\rho _e}}}{{{\rho _e}Q{N_{br}}{N_{r1}}{{\nu _k}}  + 1}}} \right) - 1$.

\subsection{Secrecy Diversity Order}
To gain more insights, the asymptotic performance in the high SNR region of SOP is investigated. The secrecy diversity order is expressed as follows \cite{2022XinweiIRSNOMA}
\begin{align}\label{div}
div =  - \mathop {\lim }\limits_{\rho  \to \infty } \frac{{\log \left[ {P_{out}^{asy} \left( \rho  \right)} \right]}}{{\log \rho }},
\end{align}
where ${P_{out}^\infty \left( \rho  \right)}$ denotes the asymptotic SOP with factor $\rho $. The asymptotic behaviors for LUs in both external and internal wiretap scenarios are analysed in the following.
\begin{corollary}\label{Corollary1}
Under the condition of external eavesdropping, the asymptotic SOP at the high SNR regime with ipSIC of the k-th LU is shown by (\ref{AsySOP EE decode k ipSIC}) at the top of the next page
\begin{figure*}[!t]
\begin{small}
\begin{align}\label{AsySOP EE decode k ipSIC}
  P_{asy,k}^{ipSIC,EE}\left( {R_k^{EE}} \right) \approx& \sum\limits_{s = 0}^S {\sum\limits_{l = 0}^{K - k} {\sum\limits_{d = 0}^D  {{\kappa {G_s}{G_d}}}{}{{
  {K - k} \choose
  l }}\frac{{{{\left( { - 1} \right)}^l}}}{{k + l}}} }  \notag \\
&\times {\left\{ {1 - \frac{2}{{\Gamma \left( Q \right)}}{{\left[ {\frac{{{\eta _s}{N_{ipu}}\varpi {\tau _d}}}{{\left( {{a_k} - {\eta _s}{{\nu _k}} } \right){N_{br}}{N_{rk}}}}} \right]}^{\frac{Q}{2}}}{K_Q}\left[ {2\sqrt {\frac{{{\eta _s}{N_{ipu}}\varpi {\tau _d}}}{{\left( {{a_k} - {\eta _s}{{\nu _k}} } \right){N_{br}}{N_{rk}}}}} } \right]} \right\} ^{k + l}}.
\end{align}
\end{small}
\begin{small}
\begin{align}\label{AsySOP IE decode k ipSIC}
P_{asy,k}^{ipSIC,IE}\left( {R_k^{IE}} \right) \approx& \sum\limits_{s = 0}^S {\sum\limits_{l = 0}^{K - k} {\sum\limits_{d = 0}^D {{{\kappa {G_s}{G_d}}}{}} } } {{
  {K - k} \choose
  l }}\frac{{{{\left( { - 1} \right)}^l}}}{{k + l}}\notag \\ &\times {\left\{ {1 - \frac{2}{{\Gamma \left( Q \right)}}{{\left[ {\frac{{{\vartheta _s}\varpi {N_{ipu}}{\tau _d}}}{{\left( {{a_k} - {\vartheta _s}{{\nu _k}} } \right){N_{br}}{N_{rk}}}}} \right]}^{\frac{Q}{2}}}{K_Q}\left[ {2\sqrt {\frac{{{\vartheta _s}\varpi {N_{ipu}}{\tau _d}}}{{\left( {{a_k} - {\vartheta _s}{{\nu _k}} } \right){N_{br}}{N_{rk}}}}} } \right]} \right\}^{k + l}}.
\end{align}
\end{small}
\hrulefill \vspace*{0pt}
\end{figure*}
\begin{proof}
According to (\ref{SINR k decode g v2}), the CDF of ${{\tilde \gamma }_k}$ can be written as follows
\begin{small}
\begin{align}\label{AsyCDF of SINR k temp }
F_{{{\tilde \gamma }_k}}^{ipSIC}\left( x \right) &= P\left( {{{\tilde \gamma }_k} < x} \right) \notag \\
 &= P\left( {\frac{{\rho {{\left| {{H_k}} \right|}^2}{a_k}}}{{\rho {{\left| {{H_k}} \right|}^2}{\nu _k} + \varpi \rho {{\left| {{h_{ipu}}} \right|}^2} + 1}} < x} \right) \notag \\
 &= \int_0^\infty  {{f_{{{\left| {{h_{ipu}}} \right|}^2}}}\left( y \right){F_{{{\left| {{H_k}} \right|}^2}}}\left[ {\frac{{x\left( {\varpi \rho y + 1} \right)}}{{\rho \left( {{a_k} - {\nu _k}x} \right)}}} \right]} dy.
\end{align}
\end{small}
Let $\Upsilon  = \frac{{x\left( {\varpi \rho y + 1} \right)}}{{\rho \left( {{a_k} - {\nu _k}x} \right)}}$ and $\Upsilon  \approx \frac{{\varpi xy}}{{{a_k} - {\nu _k}x}}$ when $\rho  \to \infty $. Upon substituting $\Upsilon $ into (\ref{CDF of k decode g ipSIC temp}), we have
\begin{small}
\begin{align}\label{AsyCDF of SINR for k}
\begin{gathered}
  F_{{asy},{{\tilde \gamma }_k}}^{ipSIC} \left( x \right) = \frac{1}{{{N_{ipu}}}}\int_0^\infty  {{e^{ - \frac{y}{{{N_{ipu}}}}}}}  \hfill
   F_{_{{asy},{{\left| {{{{H}}_{k}}} \right|}^2}}}^{ipSIC} \left(\Upsilon \right)dy.
\end{gathered}
\end{align}
\end{small}
With the help of Gauss-Laguerre integration as well as (\ref{CDF of cascaded channel ORD_temp}), we can acquire (\ref{AsySOP EE decode k ipSIC}) and the proof is completed.
\end{proof}
\end{corollary}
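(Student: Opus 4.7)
The plan is to insert a high-SNR approximation of $F_{\tilde{\gamma}_k}^{ipSIC}$ into the general SOP integrand in (\ref{SOP k}) and then dispatch the resulting integrations with Gauss-Laguerre quadrature, closely paralleling the proof of Theorem~\ref{Theorem1} in Appendix~B. All of the novelty lies in obtaining the asymptotic CDF of $\tilde{\gamma}_k$ under ipSIC; once this is in hand, the outer integration against $f_{\tilde{\gamma}_{EE \to k}}^{ipSIC}$ from Lemma~\ref{Lemma3} is handled by the same technique used for the exact SOP.

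First, I would condition on the residual-interference power $y = |h_{ipu}|^2$ to arrive at the representation in (\ref{AsyCDF of SINR k temp }), whose integrand involves $F_{|H_k|^2}$ evaluated at $\Upsilon = \frac{x(\varpi \rho y + 1)}{\rho(a_k - \nu_k x)}$. The critical observation is that as $\rho \to \infty$ the constant $+1$ becomes negligible compared with $\varpi \rho y$, so $\Upsilon$ collapses to $\frac{\varpi x y}{a_k - \nu_k x}$ and the $\rho$ dependence disappears entirely. This is precisely the mechanism that produces an error floor and is responsible for the zero secrecy diversity order that the corollary will ultimately imply.

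Next, I would plug the ordered-statistics CDF of $|H_k|^2$ derived in Appendix~A into the conditional integral, so that the factor $\kappa$, the alternating binomial sum in $l$, and the $(k+l)$-th power on the Bessel-function bracket are inherited automatically; I would then apply Gauss-Laguerre quadrature in an index $d$ to the integration against the exponential density of $y$. The outcome is an asymptotic analogue of (\ref{CDF SINR K decode K ipSIC}) in which the Bessel-function argument is replaced by $\frac{\varpi x N_{ipu} \tau_d}{(a_k - \nu_k x) N_{br} N_{rk}}$.

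Finally, substituting this asymptotic CDF together with $f_{\tilde{\gamma}_{EE \to k}}^{ipSIC}(x)$ from Lemma~\ref{Lemma3} into the SOP integral and applying a second Gauss-Laguerre approximation indexed by $s$ to the outer integral reproduces, after matching the substitution used in Theorem~\ref{Theorem1} to identify $\eta_s$, the expression in (\ref{AsySOP EE decode k ipSIC}). The main obstacle, I expect, is shepherding the ordered-statistics bookkeeping through the nested Laguerre quadratures and justifying that truncating the $+1$ in $\Upsilon$ contributes only a vanishing correction as $\rho \to \infty$; once both of these are handled cleanly, the remaining algebra is mechanical.
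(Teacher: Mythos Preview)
Your proposal is correct and mirrors the paper's proof: the key step is the high-SNR collapse $\Upsilon\approx \frac{\varpi xy}{a_k-\nu_k x}$, after which the ordered-statistics CDF from Appendix~A and Gauss--Laguerre in $d$ give the asymptotic $F_{\tilde\gamma_k}^{ipSIC}$, and the outer integration is then handled exactly as in Theorem~\ref{Theorem1}. One small clarification: the outer step does not literally integrate against $f_{\tilde\gamma_{EE\to k}}^{ipSIC}$ from Lemma~\ref{Lemma3}; rather, as in Appendix~B, $|H_e|^2$ is replaced by its mean $QN_{br}N_{re}$ and the remaining integral over $|h_{ipe}|^2$ is what produces the Gauss--Laguerre nodes $\tau_s$ and hence $\eta_s$---but since you already invoke ``the substitution used in Theorem~\ref{Theorem1} to identify $\eta_s$,'' you are effectively on the same track.
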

\begin{remark}\label{Remark1}
Upon substituting (\ref{AsySOP EE decode k ipSIC}) into (\ref{div}), we find that the secrecy diversity order in external eavesdropping scenario with ipSIC of the k-th LU equals zero, which is due to the negative effects caused by ipSIC.
\end{remark}

When $\varpi $=0, the asymptotic SOPs of the \emph{k}-th LU at the high SNR regime for \emph{Q} = 1 and \emph{Q} $ \geqslant $ 2 with pSIC are expressed as
\begin{small}
\begin{align}\label{AsySOP EE decode k pSIC M1}
P_{asy,k}^{pSIC,EE}\left( {R_k^{EE}} \right) = \frac{\kappa }{k}{\left\langle {1 - \frac{1}{{\Gamma \left( Q \right)}}\left\{ {1 + s\left( \psi  \right)\ln \left[ {s\left( \psi  \right)} \right]} \right\}} \right\rangle ^k},
\end{align}
\end{small}
and
\begin{small}
\begin{align}\label{AsySOP EE decode k pSIC M2}
P_{asy,k}^{pSIC,EE}\left( {R_k^{EE}} \right) = \frac{\kappa }{k}{\left[ {\frac{{s\left( \psi  \right)}}{{Q - 1}}} \right]^k},
\end{align}
\end{small}
respectively.
\emph{\begin{proof}
In the case of Q = 1 and Q $ \geqslant $ 2, the modified Bessel function ${K_Q}\left( x \right)$ can be approximated as ${K_1}\left( x \right) \approx \frac{1}{x} + \frac{x}{2}\ln \left( {\frac{x}{2}} \right)$ and ${K_Q}\left( x \right) \approx \frac{1}{2}\left[ {\frac{{\left( {Q - 1} \right)!}}{{{{\left( {x/2} \right)}^Q}}} - \frac{{\left( {Q - 2} \right)!}}{{{{\left( {x/2} \right)}^{Q - 2}}}}} \right]$, respectively \emph{\cite{2020ZhiguoSimpleDesignIRSNOMA}}.
Upon substituting the approximation formulas into (\ref{CDF SINR K decode K ipSIC}) and keep the first term (l = 0) of the summation formula, the CDFs of SINR for the k-th LU to decode its own message are transformed into
\begin{small}
\begin{align}\label{AsyCDF k decode k pSIC M1}
F_{_{asy,{{\hat \gamma }_k}}}^{pSIC}\left( x \right) =& \kappa \sum\limits_{l = 0}^{K - k} {{
  {K - k} \choose
  l }} \frac{{{{\left( { - 1} \right)}^l}}}{{k + l}}\notag \\ &\times {\left\langle {1 - \frac{{{1}}}{{\Gamma \left( Q \right)}}\left\{ {1 +{\left[ {j\left( x \right)} \right]} \ln \left[ {j\left( x \right)} \right]} \right\}} \right\rangle ^{k}},
\end{align}
\end{small}
and
\begin{small}
\begin{align}\label{AsyCDF k decode k pSIC M2}
F_{_{asy,{{\hat \gamma }_k}}}^{pSIC}\left( x \right) =& \kappa \sum\limits_{l = 0}^{K - k} {{
  {K - k} \choose
  l }} \frac{{{{\left( { - 1} \right)}^l}}}{{k + l}}\notag \\ &\times{\left[ {\frac{x}{{\left( {{a_k} - {{\nu _k}} x} \right){\zeta _2}\left( {Q - 1} \right)}}} \right]^{k}},
\end{align}
\end{small}respectively, where $j\left( x \right) = \frac{x{\zeta _2^{ - 1}}}{{ {{a_k} - {{\nu _k}} x} }}$. With the assistance of (\ref{SOP of EE decode k AppendixB}) and some straightforward calculations, we can acquire (\ref{AsySOP EE decode k pSIC M1}) as well as (\ref{AsySOP EE decode k pSIC M2}) and the proof is completed.
\end{proof}}

\begin{remark}\label{Remark2}
Upon substituting (\ref{AsySOP EE decode k pSIC M1}) and (\ref{AsySOP EE decode k pSIC M2}) into (\ref{div}), we find that the secrecy diversity order in external eavesdropping scenario with pSIC of the k-th LU equals k, which means that the secrecy diversity order of the k-th LU is correlated with channel ordering.
\end{remark}

\begin{corollary}\label{Corollary2}
Under the condition of internal eavesdropping, the asymptotic SOP of the k-th LU at the high SNR regime with ipSIC is shown by (\ref{AsySOP IE decode k ipSIC}) at the top of next page.
\end{corollary}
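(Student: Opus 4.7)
The plan is to adapt the argument of Corollary \ref{Corollary1} to the internal-eavesdropping scenario, noting that the legitimate-user CDF $F_{\tilde\gamma_k}^{ipSIC}$ is identical in both scenarios, so only the density it is integrated against needs to change. Concretely, (\ref{SOP IE decode k ipSIC temp}) now pairs $F_{\tilde\gamma_k}^{ipSIC}$ with the internal-Eve PDF (\ref{PDF IE decode k ipSIC v2}) from Lemma \ref{Lemma4}, and the two main steps will be (i) obtaining the asymptotic form of the inner CDF as $\rho\to\infty$, and then (ii) reducing the outer $x$-integral to a finite weighted sum via Gauss--Laguerre quadrature applied to that internal-Eve density.

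For step (i), I would re-use the manipulation in (\ref{AsyCDF of SINR k temp })--(\ref{AsyCDF of SINR for k}): write $F_{\tilde\gamma_k}^{ipSIC}(x)$ as an expectation over the ipSIC residual $|h_{ipu}|^2$, and replace the composite threshold $\Upsilon=x(\varpi\rho y+1)/[\rho(a_k-\nu_k x)]$ by its high-SNR surrogate $\varpi xy/(a_k-\nu_k x)$. The factor of $\rho$ drops out entirely, leaving a residual-interference-dominated floor. Combining the ordered-channel binomial expansion used to prove Lemma \ref{Lemma1} with a Gauss--Laguerre quadrature over $y$ then yields an asymptotic legitimate CDF whose argument inside $K_Q$ is proportional to $\varpi x\tau_d/[(a_k-\nu_k x)N_{br}N_{rk}]$, raised to the power $k+l$.

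For step (ii), I would then substitute the transformed argument $2^{R_k^{IE}}(1+x)-1$ into the step-(i) CDF and integrate against (\ref{PDF IE decode k ipSIC v2}). Reusing the Gauss--Laguerre step that carried Theorem \ref{Theorem2} from its integral form to the closed sum in (\ref{SOP IE decode k ipSIC}), the contribution of $\tilde\gamma_{IE\to k}$ concentrates at the deterministic representative value $\vartheta_s$ already defined beneath (\ref{SOP IE decode k ipSIC}). Substituting $x\mapsto\vartheta_s$ into the asymptotic CDF and summing over the quadrature index $s$ with weights $G_s$ produces exactly the triple-sum expression (\ref{AsySOP IE decode k ipSIC}).

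The main obstacle I expect is bookkeeping rather than any new analytic difficulty. One must keep the two Gauss--Laguerre indices (index $s$ coming from the Eve's PDF, index $d$ from the ipSIC residual) strictly separate; apply the high-SNR surrogate only to the legitimate CDF, since the wiretap SNR $\rho_e$ is held fixed and the Eve's distribution retains its full structure as $\rho\to\infty$; and verify that each $(a_k-\vartheta_s\nu_k)$ denominator stays positive so that the $K_Q$-terms remain real-valued. Once these checks are aligned with the conventions used in Corollary \ref{Corollary1}, the claimed identity (\ref{AsySOP IE decode k ipSIC}) follows by the same chain of substitutions.
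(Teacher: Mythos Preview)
Your plan is essentially the paper's own route: Corollary~\ref{Corollary2} carries no separate proof there and is obtained by re-running the high-SNR reduction of Corollary~\ref{Corollary1} while swapping the external-Eve node $\eta_s$ for the internal-Eve node $\vartheta_s$ from Theorem~\ref{Theorem2}. Step~(i) of your proposal matches (\ref{AsyCDF of SINR k temp })--(\ref{AsyCDF of SINR for k}) exactly.

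One point to tighten in step~(ii): you describe integrating the asymptotic CDF against the full ordered-statistics density (\ref{PDF IE decode k ipSIC v2}) of Lemma~\ref{Lemma4} and then applying Gauss--Laguerre to that integral, but this is not how $\vartheta_s$ actually arises. Following Appendix~B (adapted to the internal Eve), the paper first replaces the cascaded gain $|H_1|^2$ by its unordered mean $QN_{br}N_{r1}$, and the $s$-indexed Gauss--Laguerre sum is the quadrature over the Eve's ipSIC residual $|h_{ipe}|^2$ alone, not over the distribution of $\tilde\gamma_{IE\to k}$. If you literally tried to collapse the integral against (\ref{PDF IE decode k ipSIC v2}) by quadrature you would not land on the simple node $\vartheta_s$. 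With that mechanism corrected, your two steps combine to give (\ref{AsySOP IE decode k ipSIC}) verbatim, and your bookkeeping remarks about keeping the $s$- and $d$-quadratures separate and holding $\rho_e$ fixed are exactly right.
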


When $\varpi $=0, the asymptotic SOPs of the \emph{k}-th LU at the high SNR regime with ipSIC for \emph{Q} = 1 and \emph{Q} $ \geqslant $ 2 are expressed as
\begin{small}
\begin{align}\label{AsySOP IE decode k pSIC M1}
P_{asy,k}^{pSIC,IE}\left( {R_k^{IE}} \right) = \frac{\kappa }{k}{\left\langle {1 - \frac{1}{{\Gamma \left( Q \right)}}\left\{ {1 + u\left( \varsigma  \right)\ln \left[ {u\left( \varsigma  \right)} \right]} \right\}} \right\rangle ^k},
\end{align}
\end{small}
and
\begin{small}
\begin{align}\label{AsySOP IE decode k pSIC M2}
P_{asy,k}^{pSIC,IE}\left( {R_k^{IE}} \right) = \frac{\kappa }{k}{\left[ {\frac{\varsigma }{{\left( {Q - 1} \right)\left( {{a_k} - {{\nu _k}\varsigma } } \right){\zeta _2}}}} \right]^k},
\end{align}
\end{small}
respectively.
\begin{remark}\label{Remark3}
Upon substituting (\ref{AsySOP IE decode k ipSIC}) into (\ref{div}), we find that the secrecy diversity order of the k-th LU in internal eavesdropping scenario with ipSIC equals zero, which is due to the negative effects caused by ipSIC. Similarly, plugging (\ref{AsySOP IE decode k pSIC M1}) and (\ref{AsySOP IE decode k pSIC M2}) into (\ref{div}), we see that the secrecy diversity order with pSIC becomes k which is associated with channel ordering.
\end{remark}

\begin{table}[]
\scriptsize
\renewcommand\arraystretch{1.5}
\centering
\caption{The table of Monte Carlo simulation parameters}
\begin{tabular}{|l|l|}
\specialrule{0em}{1pt}{1pt}
\hline
Average SNR of Eve                         & ${\rho _e} = 10$ dB                                                        \\ \hline
\begin{tabular}[c]{@{}l@{}}The power allocation \\ coefficient for users\end{tabular}  & \begin{tabular}[c]{@{}l@{}}$\{a_1,a_2,a_3\} = \{0.6,0.3,0.1\}$\end{tabular}  \\ \hline
\begin{tabular}[c]{@{}l@{}}The power allocation\\ coefficient for users with AN\end{tabular}  & \begin{tabular}[c]{@{}l@{}}$\{a_1,a_2,a_3,a_{an}\} = \{0.4,0.2,0.1,0.3\}$\end{tabular}  \\ \hline
The target secrecy rates for users         & \begin{tabular}[c]{@{}l@{}}$R_k^\varphi  = 0.04$ BPCU\\ $k \in \left\{ {1,2,3} \right\}$, $\varphi  \in \left\{ {EE,IE} \right\}$\end{tabular}        \\ \hline
The distance from BS to RIS                & ${d_{br}} = 3{\text{ m}}$                                                            \\ \hline
The distance from RIS to users             & \begin{tabular}[c]{@{}l@{}}$\{d_{r1},d_{r2},d_{r3}\} = \{6\text{ m},4\text{ m},2\text{ m}\}$\end{tabular} \\ \hline
The distance from RIS to Eve               & ${d_{re}} = 8\text{ m}$                                                        \\ \hline
\specialrule{0em}{1pt}{1pt}
\end{tabular}
\end{table}
\subsection{System Secrecy Outage Probability}
In order to characterize the holistic secrecy performance of RIS-NOMA networks, the system SOP with ipSIC/pSIC can be defined as follows
\begin{align}\label{system SOP}
P_{sys}^{\phi ,\varphi } = 1 - \prod\limits_{k = 1}^K {\left( {1 - P_{out,k}^{\phi ,\varphi }} \right)},
\end{align}where $\phi  \in \left\{ {ipSIC,pSIC} \right\}$, $\varphi  \in \left\{ {EE,IE} \right\}$, ${P_{out,k}^{\phi ,\varphi }}$ can be obtained from (\ref{SOP EE decode k ipSIC}), (\ref{SOP IE decode k ipSIC}), (\ref{SOP EE decode k pSIC}) and (\ref{SOP IE decode k pSIC}).
\subsection{Delay-Limited Transmission}
Given the delay-limited transmission mode, the superposed messages are transmitted at a constant rate, which is limited to the SOP on account of the malicious attack from Eve. Therefore, the secrecy system throughput of RIS-NOMA networks with ipSIC/pSIC under delay-limited transmission mode can be defined based on \cite{2016CaijunNOMA,2013AANasir}
\begin{align}\label{SST define}
R_{T}^{\phi ,\varphi } = \sum\limits_{k = 1}^K {\left( {1 - P_{out,k}^{\phi ,\varphi }} \right)} R_k^\varphi ,
\end{align}
where $\phi  \in \{ ipSIC,pSIC\} $ and $\varphi  \in \{ EE,IE\} $. ${P_{out,k}^{\phi,\varphi} }$ can be acquired from (\ref{SOP EE decode k ipSIC}), (\ref{SOP EE decode k pSIC}), (\ref{SOP IE decode k ipSIC}) and (\ref{SOP IE decode k pSIC}), respectively.
\section{Numerical Results}\label{SectionIV}
In this section, numerical results are provided to substantiate the accuracy of theoretical expressions derived in the aforementioned sections for RIS-NOMA networks. For sake of notational simplicity, Monte Carlo simulation parameters involved are summarized in Table I, where BPCU is an abbreviation for bit per channel use, and the number of Monte Carlo repetitions is ${10^6}$ \cite{2018XinweiFDNOMA,2022XinweiIRSNOMA} and ${a_{an}}$ denotes the power allocation coefficient of AN. The path loss exponent $\alpha $ is set to 2 and the tade-off value of Gauss-Laguerre integration parameter is 300. Suppose that $\emph{K} = 3$ and the variances of complex channel fading coefficients are represented as ${N_{br}} = d_{br}^{ - \alpha }$, ${N_{rk}} = d_{rk}^{ - \alpha }$ and ${N_{re}} = d_{re}^{ - \alpha }$, respectively. Without loss of the generality, the secure performance of RIS-OMA and conventional OMA transmission schemes are considered as benchmarks. Specifically, time division multiple access is adopted for RIS-OMA networks, where each user receives its own signal in one specific time slot and the entire process occupies a total of \emph{K} orthogonal time slots. The AF relaying works in HD mode with a amplification factor of 2. The HD DF relaying equipped with a single antenna consumes two time slots to complete the communication, i.e., one for the BS-DF relaying transmission and the other for the DF relaying-users transmission, respectively. The FD DF relaying includes a pair of transceiver antennas \cite{shi2022secure}. In addition, the secrecy performance of AN-aided NOMA networks is also taken into account, where the FD DF relaying will send AN based on a pseudo-random sequence instead of RIS. Note that the sequence is known to the legitimate user, but remains unknown to the Eves \cite{feng2016robust,zou2016physical}.

\subsection{External Eavesdropping Scenario}
In this subsection, the secrecy outage behaviours of RIS-NOMA networks is illustrated under external eavesdropping scenario.

Fig. 2 plots the SOP versus transmitting SNR with the simulation setup \emph{M} = 16, \emph{P} = 2, \emph{Q} = 8 and $R_1^{EE} = R_2^{EE} = R_3^{EE}$ = 0.04 BPCU in external eavesdropping scenario. The analysis curves of SOP for LUs can be plotted according to (\ref{SOP EE decode k ipSIC}) and (\ref{SOP EE decode k pSIC}), which are consistent with the simulation results. The accuracy of derivation is verified as the asymptotic curves realize convergence according to (\ref{AsySOP EE decode k ipSIC}), (\ref{AsySOP EE decode k pSIC M1}) and (\ref{AsySOP EE decode k pSIC M2}). One can observe that the secrecy performance of the nearest LU (\emph{k} = 3) with pSIC is always superior to that of the distant LUs (\emph{k} = 1, 2). The reason lies in that the near user obtains larger secrecy diversity order, which agrees with the conclusions in \textbf{Remark \ref{Remark2}}. Another observation is that the SOP curves for users under ipSIC converges to an error floor at the high SNR and acquires a zero secrecy diversity order. This phenomenon can also be verified by the insights in \textbf{Remark \ref{Remark1}}. In addition, we can see that the secrecy performance of RIS-NOMA exceeds that of RIS-OMA, AF relaying, HD/FD DF relaying and AN-aided NOMA schemes. This can be understood as follows: 1) RIS-NOMA has the ability to guarantee user fairness more effectively under multi-user cases; 2) RIS-NOMA operating in FD mode offers higher spectrum efficiency in comparison to HD/FD DF relaying; 3) RIS can provide superior security improvements for NOMA networks with lower energy consumption than confusing the Eves by asking the FD DF relaying to emit AN.

\begin{figure}[t!]
    \begin{center}
        \includegraphics[width=2.784in,  height=2.24in]{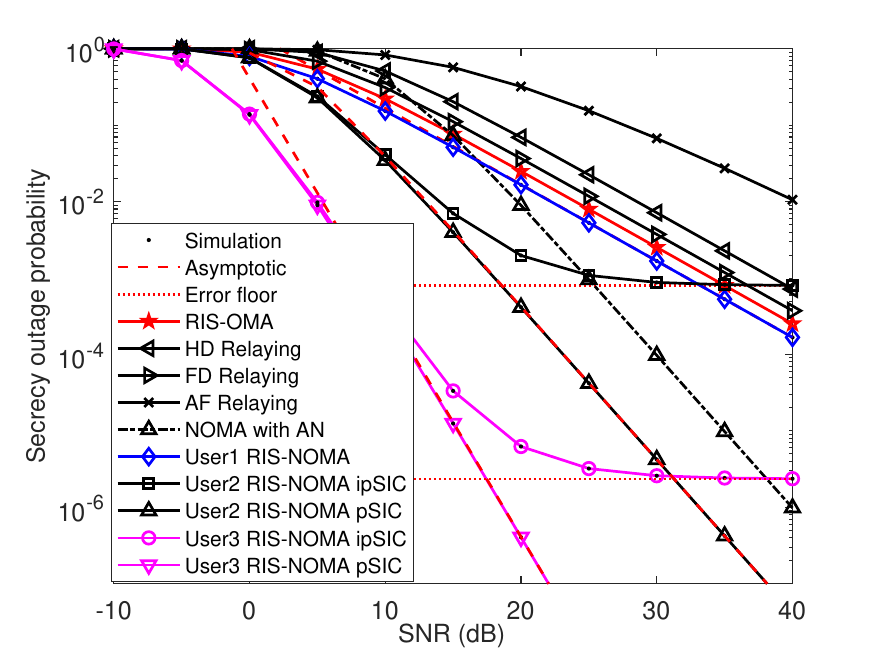}
        \caption{SOP versus transmit SNR under external eavesdropping scenario, with \emph{M} = 16, \emph{P} = 2, \emph{Q} = 8, ${{\rho _e}}$ = 0 dB, $\mathbb{E}\{ {\left| {{h_{ipu}}} \right|^2}\}  = \mathbb{E}\{ {\left| {{h_{ipe}}} \right|^2}\} $  = -20 dB, $R_1^{EE} = R_2^{EE} = R_3^{EE}$ = 0.04 and $R_{OMA}$ = 0.12 BPCU.}
        \label{SOP_EE_diff_SNR}
    \end{center}
\end{figure}

Fig. \ref{SOP_diff_Rate} plots the SOP versus transmitting SNR for all LUs with different target secrecy rates under external eavesdropping scenarios, where \emph{M} = 12, \emph{P} = 2, \emph{Q} = 6 and  $\mathbb{E}\{ {\left| {{h_{ipu}}} \right|^2}\}  = \mathbb{E}\{ {\left| {{h_{ipe}}} \right|^2}\} $  = -10 dB. One can make the following observation from figure that with the increasing of target secrecy rate, the SOP of each LU rises monotonously, which is consistent with traditional NOMA networks. The reason behind this phenomenon is that the higher target security rates raise the threshold of SOP and partial secrecy capacity with small value will be considered as secrecy outage events. Fig. \ref{SOP_diff_M} plots system SOP versus transmitting SNR with various reflecting elements under external eavesdropping scenarios, where \emph{M} = \emph{Q}, \emph{P} = 1, $\mathbb{E}\{ {\left| {{h_{ipu}}} \right|^2}\}  = \mathbb{E}\{ {\left| {{h_{ipe}}} \right|^2}\} $ = -20 dB and $R_1^{EE} = R_2^{EE} = R_3^{EE}$ = 0.04 BPCU. We can see from the figure that RIS-NOMA is capable of achieving enhanced system SOP as the number of reflecting elements gradually grow from 4 to 20. This is because that passive beamformings can obtain a larger freedom of design space by applying more reflecting elements. Besides, the increased number of reflecting elements contribute to make the cascaded communication links more reliable and provide higher channel gains between BS and LUs.

\begin{figure}[t!]
    \begin{center}
        \includegraphics[width=2.784in,  height=2.24in]{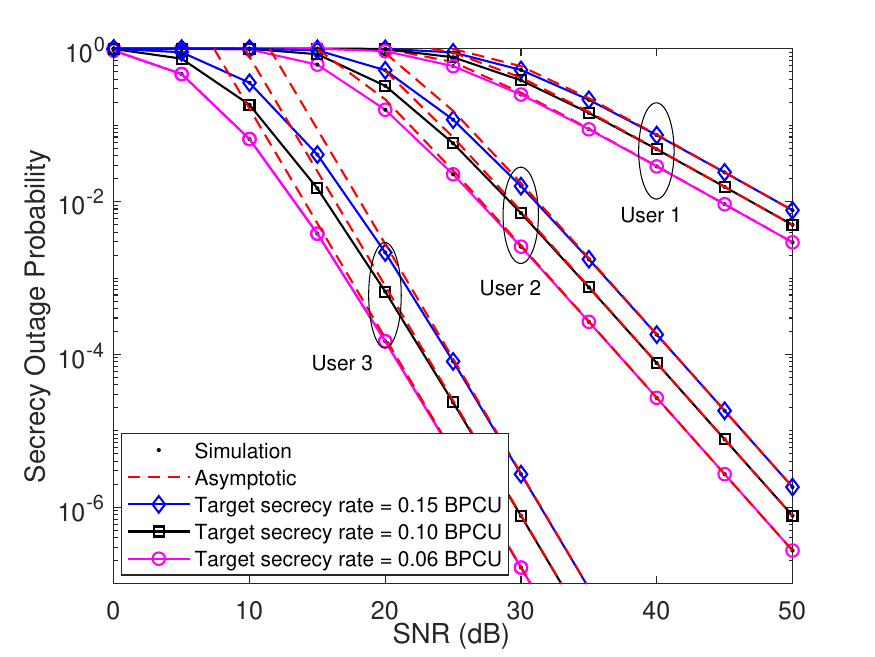}
        \caption{SOP versus transmitting SNR for all LUs with different target secrecy rates under external eavesdropping scenarios, where \emph{M} = 12, \emph{P} = 2, \emph{Q} = 6, ${{\rho _e}}$ = 10 dB, $\mathbb{E}\{ {\left| {{h_{ipu}}} \right|^2}\}  = \mathbb{E}\{ {\left| {{h_{ipe}}} \right|^2}\} $  = -10 dB.}
        \label{SOP_diff_Rate}
    \end{center}
\end{figure}

\begin{figure}[t!]
    \begin{center}
        \includegraphics[width=2.784in,  height=2.24in]{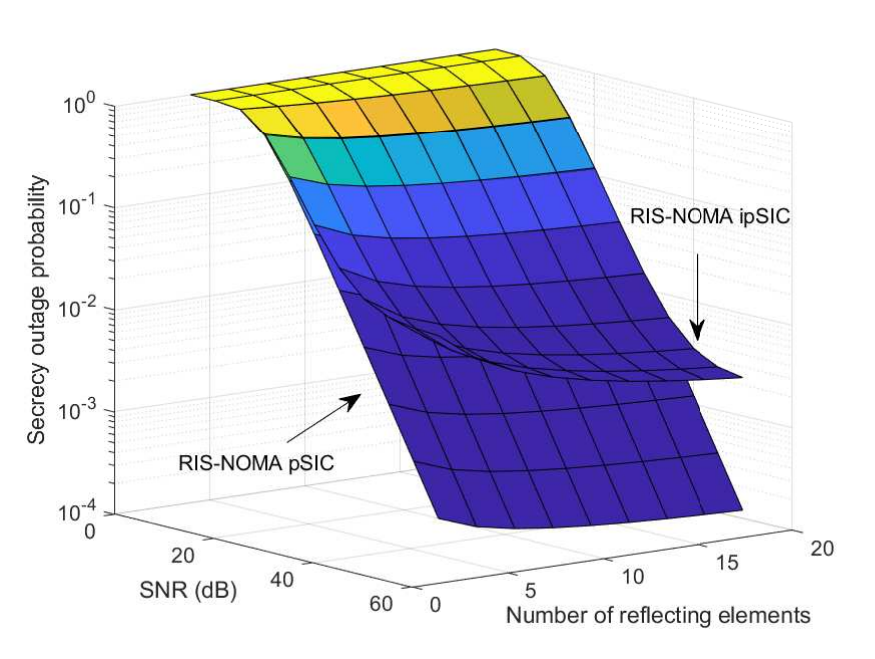}
        \caption{SOP versus transmitting SNR with varying reflecting elements under external eavesdropping scenarios, where \emph{M} = \emph{Q}, \emph{P} = 1, $\mathbb{E}\{ {\left| {{h_{ipu}}} \right|^2}\}  = \mathbb{E}\{ {\left| {{h_{ipe}}} \right|^2}\} $  = -20 dB, ${{\rho _e}}$ = 10 dB and $R_1^{EE} = R_2^{EE} = R_3^{EE}$ = 0.04 BPCU.}
        \label{SOP_diff_M}
    \end{center}
\end{figure}

Fig. 5 plots the SOP versus transmitting SNR with various ${d_{br}}$ and ${d_{rk}}$ under external eavesdropping scenarios, where \emph{M} = 12, \emph{P} = 2, \emph{Q} = 6, $\mathbb{E}\{ {\left| {{h_{ipu}}} \right|^2}\}  = \mathbb{E}\{ {\left| {{h_{ipe}}} \right|^2}\} $ = -20 dB and
$R_1^{EE} = R_2^{EE} = R_3^{EE}$ = 0.04 BPCU. As can be seen from Fig. 5 (a), the security performance of LU is compromised since the distance between BS and RIS increases from 3m to 6m. This behavior is due to the fact that line of sight signals received at RIS become fuzzy due to the serious path fading as the RIS is deployed further away. Similarly, when ${d_{br}}$ and ${d_{re}}$ are both fixed while LUs depart from RIS, the line-of-sight signal is deteriorative and the SOP of users increases severely as indicated in Fig. 5 (b). This is because that the Eve's eavesdropping ability remains unchanged and the superimposed message sent from RIS will suffer more path interference because of the larger transmission distance, which can heavily discount the quality of received signals.

\begin{figure}[htbp]
\centering

\subfigure[Different ${d_{br}}$]{
\begin{minipage}[t]{0.5\linewidth} 
\centering
\includegraphics[width=0.9\textwidth,height=1\textwidth]{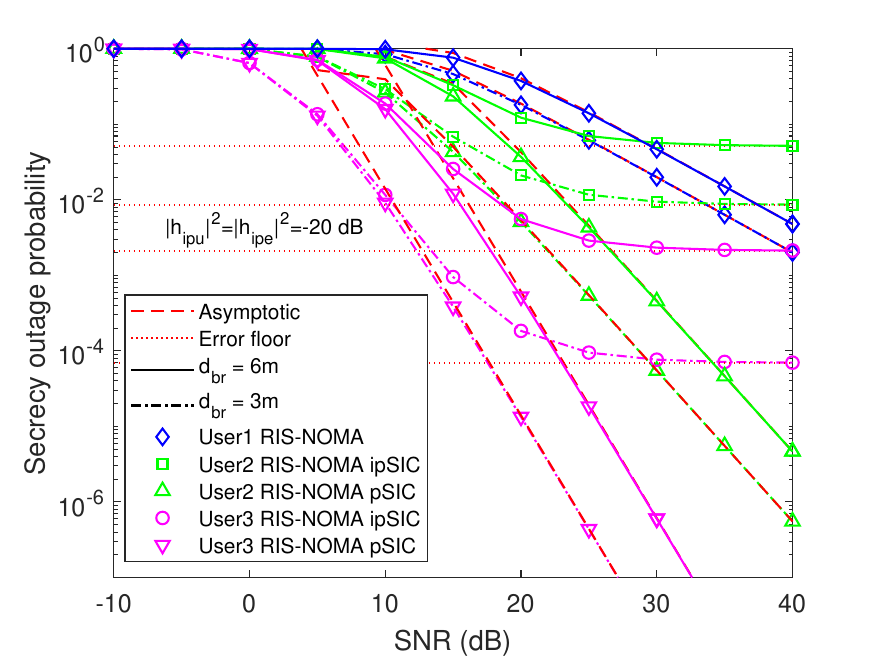}
\end{minipage}%
}%
\subfigure[Different ${d_{rk}}$]{
\begin{minipage}[t]{0.5\linewidth} 
\centering
\includegraphics[width=0.9\textwidth,height=1\textwidth]{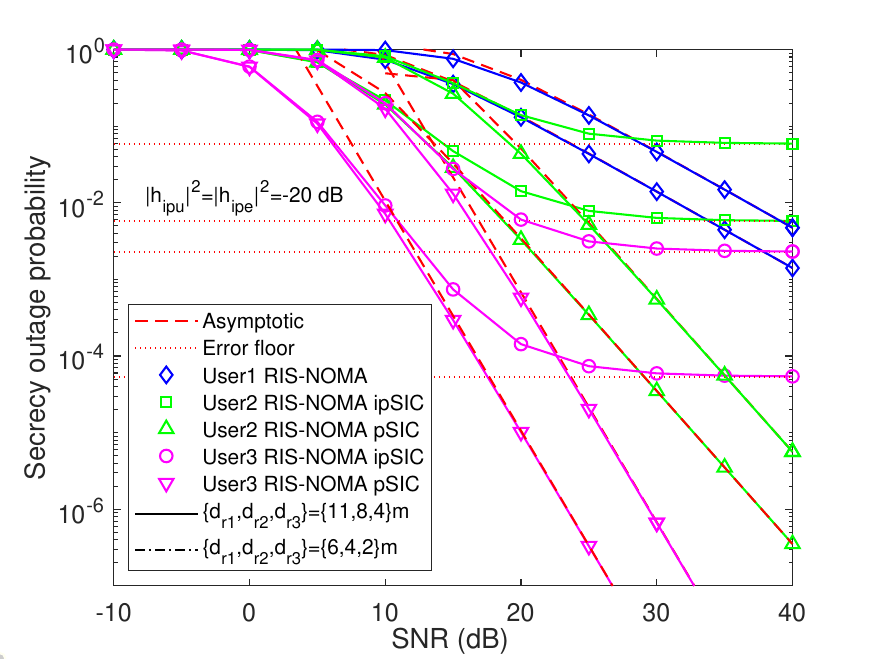}
\end{minipage}%
}%

\centering
\caption{SOP versus transmitting SNR with various ${d_{br}}$ and ${d_{rk}}$ under external eavesdropping scenarios, where $\mathbb{E}\{ {\left| {{h_{ipu}}} \right|^2}\}  = \mathbb{E}\{ {\left| {{h_{ipe}}} \right|^2}\} $ = -20 dB, ${{\rho _e}}$ = 10 dB, $R_1^{EE} = R_2^{EE} = R_3^{EE}$ = 0.04 BPCU, \emph{M} = 12, \emph{P} = 2 and \emph{Q} = 6.}
\end{figure}

Fig. 6 (a) plots the system SOP versus power allocation with fixed transmitting SNR under external eavesdropping scenarios, where $\rho $ = 10 dB, $\mathbb{E}\{ {\left| {{h_{ipu}}} \right|^2}\}  = \mathbb{E}\{ {\left| {{h_{ipe}}} \right|^2}\} $ = -20 dB, \emph{M} = 16, \emph{P} = 2, \emph{Q} = 8 and ${R_1^{EE} } = {R_2^{EE}}$ = 0.04 BPCU. Note that we consider a pair of users (\emph{K} = 2) in RIS-NOMA networks while the power allocation coefficients for user 1 and user 2 are set as ${a_1} = {a_T}$ and ${a_2} = 1 - {a_T}$, where ${a_T}$ presents the power offset parameter ranging from 0 to 1, i.e., ${a_T} \in \left[ {0,1} \right]$. One can observe from Fig. 6 (a) that the system SOP is sensitive to the variation of power allocation for multiple users and providing the distant LU with a larger power allocation factor is beneficial to the secure performance of RIS-NOMA networks. The reason is that NOMA specializes in allocating more power to the LUs distributed at the edge of cell to ensure they can receive high-quality signals, which enhances the overall ability to resist wiretapping.
\begin{figure}[htbp]
\centering

\subfigure[External eavesdropping scenario]{
\begin{minipage}[t]{0.5\linewidth} 
\centering
\includegraphics[width=0.9\textwidth,height=1\textwidth]{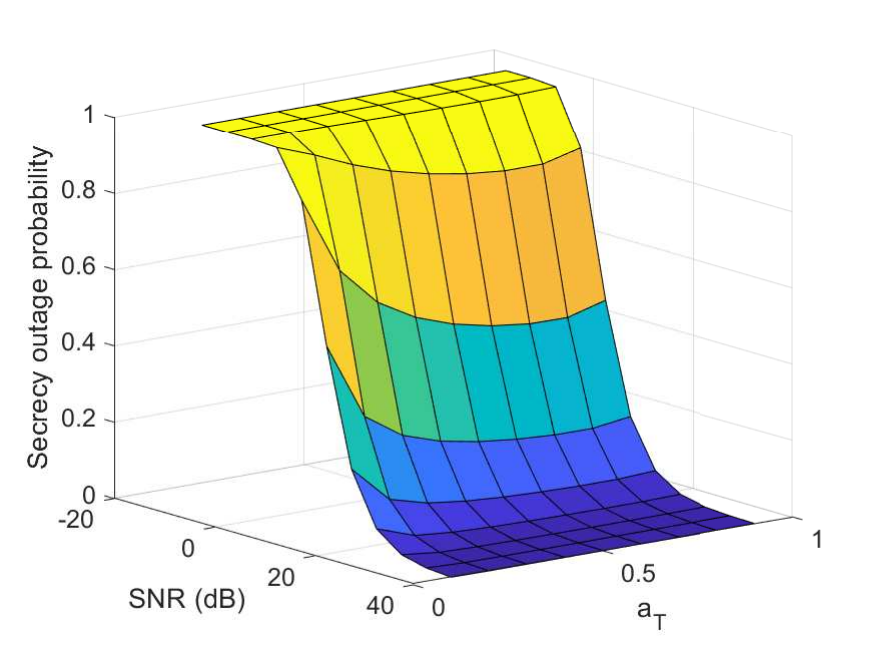}
\end{minipage}%
}%
\subfigure[Internal eavesdropping scenario]{
\begin{minipage}[t]{0.5\linewidth} 
\centering
\includegraphics[width=0.9\textwidth,height=1\textwidth]{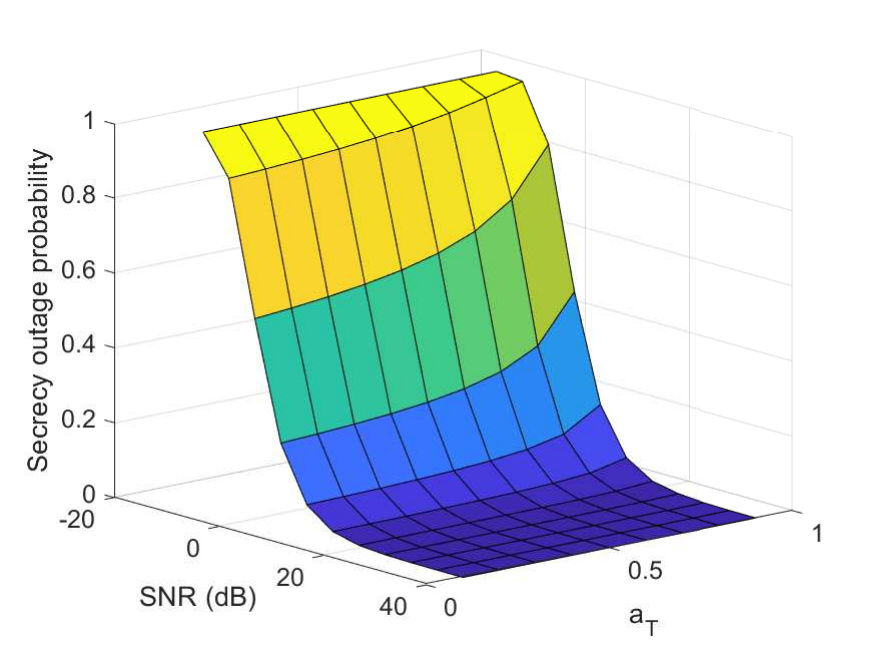}
\end{minipage}%
}%

\centering
\caption{System SOP versus power allocation with fixed transmitting SNR under different eavesdropping scenarios, where ${{\rho _e}}$ = 10 dB, $\mathbb{E}\{ {\left| {{h_{ipu}}} \right|^2}\}  = \mathbb{E}\{ {\left| {{h_{ipe}}} \right|^2}\} $ = -20 dB, $R_1^\varphi  = R_2^\varphi $ = 0.04 BPCU, $\varphi  \in \left\{ {EE,IE} \right\}$, \emph{M} = 16, \emph{P} = 2 and \emph{Q} = 8.}
\end{figure}

Fig. \ref{SST_EE} plots the secrecy system throughput versus SNR in delay-limited transmission mode under external eavesdropping case, where ${{\rho _e}}$ = 10 dB, $\mathbb{E}\{ {\left| {{h_{ipu}}} \right|^2}\}  = \mathbb{E}\{ {\left| {{h_{ipe}}} \right|^2}\} $ = -20 dB, \emph{M} = \emph{Q} = 16, \emph{P} = 1 and ${R_1^{EE} }$ = 0.08, ${R_2^{EE} }$ = 0.17, ${R_3^{EE} }$ = 0.25 BPCU. The black upper/lower triangular curves represent the secrecy system throughput with ipSIC/pSIC, which can be plotted based on (\ref{SOP EE decode k ipSIC}) and (\ref{SOP EE decode k pSIC}), respectively. It is observed from the figure that the secrecy throughput of RIS-NOMA is notably greater than that of RIS-OMA, AF relaying, FD/HD relaying and AN-aided NOMA schemes. The origin for this behavior can be explained that RIS-NOMA networks has the advantages of high spectral efficiency and reliable channel environment. Another observation is that increasing the value of transmitting SNR, the throughput of various transmission schemes achieves the same convergency value. This is due to the fact that as SNR is on the verge of infinity, the SOPs of LUs become negligible and the secrecy system throughput is dominated by target secrecy rate.

\begin{figure}[t!]
    \begin{center}
        \includegraphics[width=2.784in,  height=2.24in]{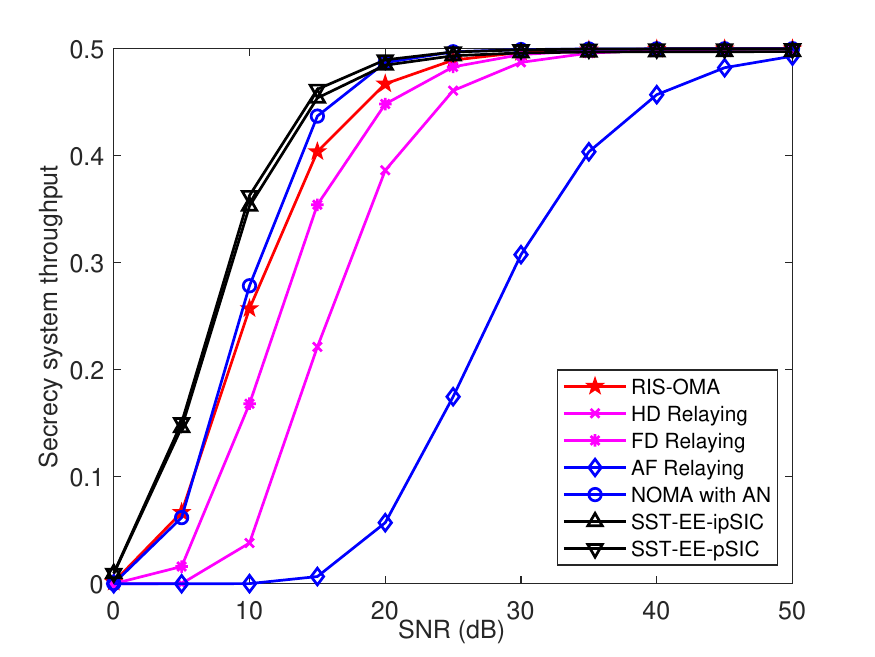}
        \caption{Secrecy system throughput versus transmitting SNR for RIS-NOMA, RIS-OMA and conventional cooperative relaying, where $\mathbb{E}\{ {\left| {{h_{ipu}}} \right|^2}\}  = \mathbb{E}\{ {\left| {{h_{ipe}}} \right|^2}\} $ = -20 dB, ${{\rho _e}}$ = 10 dB, \emph{M} = \emph{Q} = 16, \emph{P} = 1, ${R_1^{EE} }$ = 0.08, ${R_2^{EE} }$ = 0.17 and ${R_3^{EE} }$ = 0.25 BPCU.}
        \label{SST_EE}
    \end{center}
\end{figure}

\subsection{Internal Eavesdropping Scenario}

Fig. \ref{SOP_IE_diff_SNR} plots the SOP versus transmitting SNR in internal eavesdropping scenario, where \emph{M} = 16 and $R_1^{IE} = R_2^{IE} = R_3^{IE}$ = 0.04 BPCU. The analysis curves demonstrated can be acquired by (\ref{SOP IE decode k ipSIC}) and (\ref{SOP IE decode k pSIC}). Additionally, asymptotes of SOP converge in the high SNR region based on (\ref{AsySOP IE decode k ipSIC}), (\ref{AsySOP IE decode k pSIC M1}) and (\ref{AsySOP IE decode k pSIC M2}), which also confirms the correctness of derivation. It can be seen that the internal Eve can also compromise system security and error floors occur for LU 2 and LU 3, which is confirmed in \textbf{Remark \ref{Remark3}}. This is owing to the detrimental influence caused by ipSIC where the previous information is not wiped out completely. Another observation is that LU 3 has a lower outage probability compared with LU 2. This is due to the fact that LU 3 is equipped with higher quality channel conditions, which gives it a greater advantage in anti-eavesdropping. Furthermore, we can also observe that the secrecy outage behaviours of LUs are becoming worse since the setup of on-off control varies from \emph{P} = 2, \emph{Q} = 8 to \emph{P} = \emph{Q} = 4. The reason is that a smaller value of \emph{Q} means fewer elements at RIS are set to 1 (on) for arbitrary ${{\mathbf{v}}_p}$ and the channel gains for LUs are degraded because of the reduction of working elements.

\begin{figure}[t!]
    \begin{center}
        \includegraphics[width=2.784in,  height=2.24in]{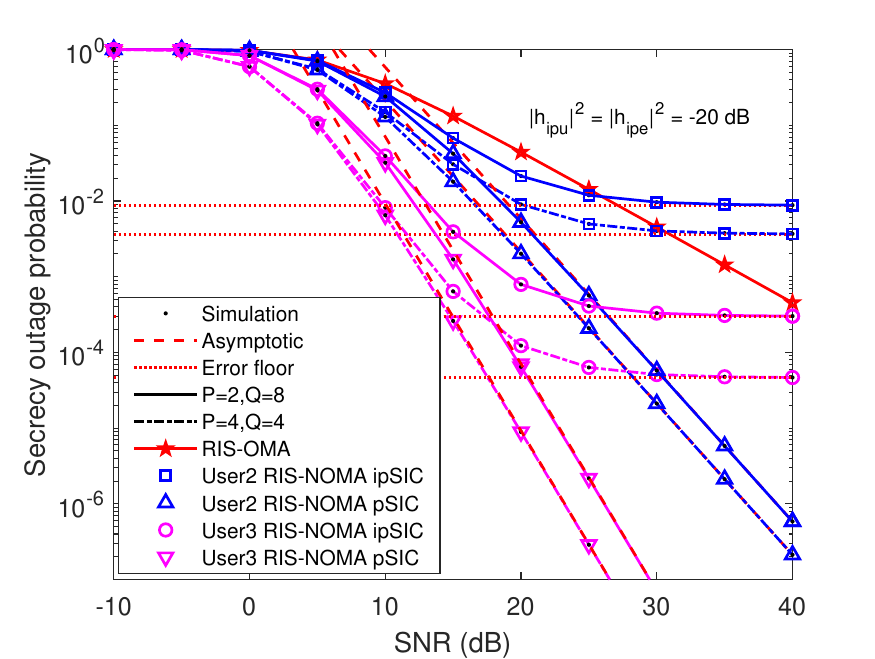}
        \caption{SOP versus transmitting SNR under internal eavesdropping scenario, with \emph{M} = 16, ${{\rho _e}}$ = 10 dB, $R_1^{IE} = R_2^{IE} = R_3^{IE}$ = 0.04 and $R_{OMA}$ = 0.12 BPCU.}
        \label{SOP_IE_diff_SNR}
    \end{center}
\end{figure}

Fig. \ref{SOP_IE_diff_hipu_hipe} plots the SOP versus transmitting SNR as the residual interference varies under internal eavesdropping case, where \emph{M} = 12, \emph{P} = 2, \emph{Q} = 6, $R_1^{IE} = R_2^{IE} = R_3^{IE}$ = 0.04 and $R_{OMA}$ = 0.12 BPCU. The simulation curves for ipSIC can be obtained from (\ref{SOP IE decode k ipSIC}), while the curves presented for pSIC are generated based on (\ref{SOP IE decode k pSIC}). The asymptotic lines further verify the reliability of the derived results. It can be seen from the figure that the residual interference brought by ipSIC will impair the signal decoding process seriously. Moreover, with the increasing of residual interference, the achieved SOP of RIS-NOMA will definitely converge to an inferior error floor. This can be explained that the high levels of ipSIC decrease the received SINR at LUs, which makes it fairly difficult for LUs to recover their own information. In consequence, it is essential to take into account the negative impact of ipSIC on network secure performance in actual communication cases.

\begin{figure}[t!]
    \begin{center}
        \includegraphics[width=2.784in,  height=2.24in]{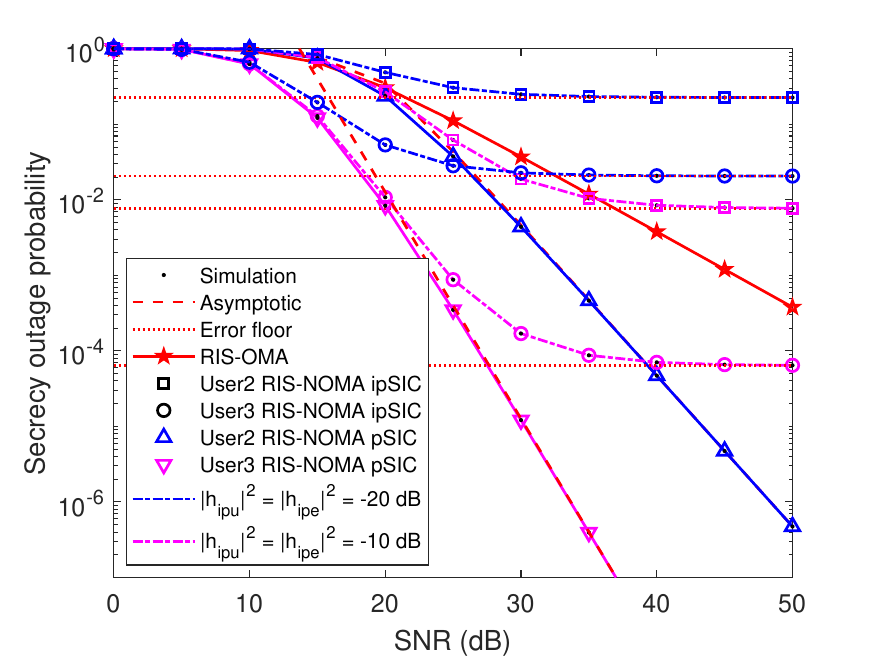}
        \caption{SOP versus transmitting SNR with different residual interference varies under internal eavesdropping case, where \emph{M} = 12, \emph{P} = 2, \emph{Q} = 6, ${{\rho _e}}$ = 5 dB, $R_1^{IE} = R_2^{IE} = R_3^{IE}$ = 0.04 and $R_{OMA}$ = 0.12 BPCU.}
        \label{SOP_IE_diff_hipu_hipe}
    \end{center}
\end{figure}

Fig. 6 (b) plots the system SOP versus power allocation with fixed transmitting SNR under internal eavesdropping scenarios, where $\rho $ = 10 dB, $\mathbb{E}\{ {\left| {{h_{ipu}}} \right|^2}\}  = \mathbb{E}\{ {\left| {{h_{ipe}}} \right|^2}\} $ = -20 dB, \emph{M} = 16, \emph{P} = 2, \emph{Q} = 8 and ${R_1^{IE} } = {R_2^{IE}}$ = 0.04 BPCU. It can be seen from this figure that as the value of ${a_T}$ increases, the secrecy outage behaviours of RIS-NOMA networks become worse seriously, which is opposite to the observation in Fig. 6 (a). The reason is that user 1 with poor channel conditions is regarded as an internal Eve, and allocating more power to user 1 can inevitably strengthen its eavesdropping ability while weakening the received signal quality of LUs, thus reducing the system SOP.

\begin{figure}[t!]
    \begin{center}
        \includegraphics[width=2.784in,  height=2.24in]{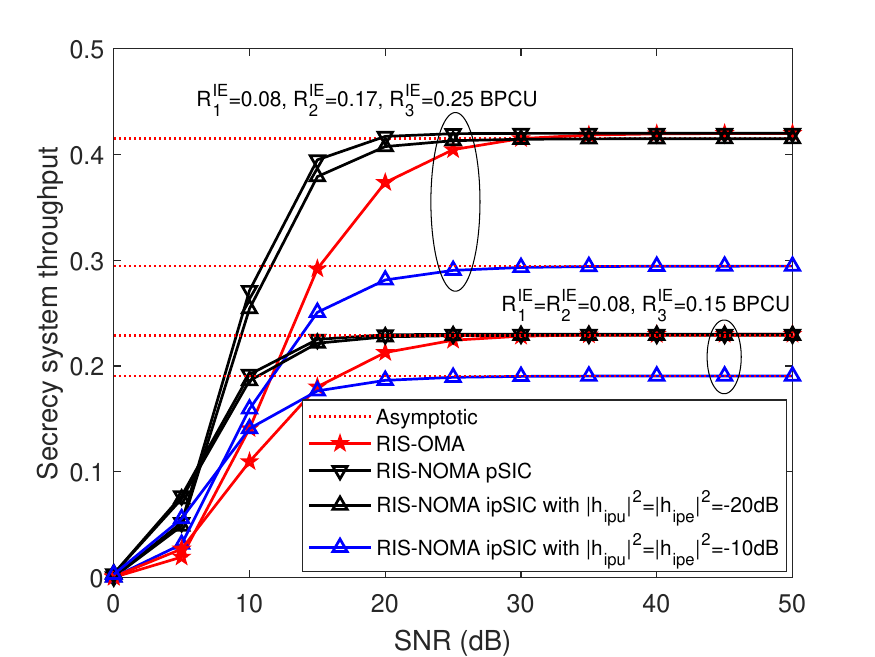}
        \caption{Secrecy system throughput versus transmitting SNR for RIS-NOMA under internal eavesdropping case, where ${{\rho _e}}$ = 10 dB and \emph{M} = 16.}
        \label{SST_IE}
    \end{center}
\end{figure}

Fig. \ref{SST_IE} plots the secrecy system throughput versus SNR in delay-limited transmission mode under internal eavesdropping case, where ${{\rho _e}}$ = 10 dB, \emph{M} = 16, \emph{P} = 2 and \emph{Q} = 8. The analysis curves of secrecy system throughput for RIS-NOMA with ipSIC/pSIC are plotted according to (\ref{SOP IE decode k ipSIC}) and (\ref{SOP IE decode k pSIC}), respectively. As can be observed from the figure, with the increase of residual interference, the secrecy system throughput of RIS-NOMA is distinctly reduced under the circumstances of ipSIC. This is because that the deterioration of imperfect cancellation process will significantly weaken the received SINR at LUs and thus raise the SOP of users, which can directly impair the secrecy throughput performance referring to (\ref{SST define}). Another observation is that the improvement of secrecy system throughput is achieved by increasing the target secrecy rate, while the secrecy outage behaviors are becoming worse when larger target secrecy rate is applied in RIS-NOMA networks as illustrated in Fig. \ref{SOP_diff_Rate}. Therefore, there exists a tradeoff to balance the performance of SOP and secrecy system throughput.

\section{Conclusion}\label{Conclusion}\label{SectionV}
In this paper, the secure communications of RIS-NOMA networks were investigated with on-off control.
New approximate and asymptotic expressions of SOP for the \emph{k}-th LU with ipSIC/pSIC were derived in RIS-NOMA networks. On this basis, the secrecy diversity order of the \emph{k}-th LU was acquired which can be determined by the residential interference and channel ordering. Referring to the analytical results, it was shown that the RIS-NOMA networks can obtain a superior secrecy outage behavior compared with RIS-OMA, AF relaying and HD/FD DF relaying. Furthermore, the expressions of secrecy system throughput in delay-limited transmission mode were derived. The results showed that the throughput performance of RIS-NOMA is much greater than that of conventional cooperative communication schemes.

\appendices
\section*{Appendix~A: Proof of Lemma \ref{Lemma1}} \label{AppendixA}
\renewcommand{\theequation}{A.\arabic{equation}}
\setcounter{equation}{0}
According to the principle of on-off control, the cascaded channel ${{\mathbf{h}}_{_{rk}}^H{\mathbf{\Theta }}{{\mathbf{h}}_{br}}}$ can be defined as ${H_k} \triangleq {\mathbf{h}}_{rk}^H{\mathbf{\Theta }}{{\mathbf{h}}_{br}} = {\mathbf{v}}_p^H{{\mathbf{D}}_{rk}}{{\mathbf{h}}_{br}}$. Hence, the SINR for the \emph{k}-th LU to decode the \emph{g}-th LU's information with ipSIC is shown as  ${{\tilde \gamma }_{k \to g}} =
\mathop {}\nolimits^{} \mathop {}\nolimits^{} \frac{{\rho {{\left| {{{{H}}_{k}}} \right|}^2}{a_g}}}{{\rho {{\left| {{{{H}}_{k}}} \right|}^2}{{\nu _g}} + \varpi \rho {{\left| {{h_{ipu}}} \right|}^2} + 1}}$. Hence, the CDF of ${{\tilde \gamma }_{k \to g}}$ is given by
\begin{align}\label{CDF of k decode g ipSIC temp}
  F_{{{\tilde \gamma }_{k \to g}}}^{ipSIC}\left( x \right) &= P\left( {{{\tilde \gamma }_{k \to g}} < x} \right) \notag \\
   &= P\left( {\frac{{\rho {{\left| {{{{H}}_{k}}} \right|}^2}{a_g}}}{{\rho {{\left| {{{{H}}_{k}}} \right|}^2}{{\nu _g}} + \varpi \rho {{\left| {{h_{ipu}}} \right|}^2} + 1}} < x} \right) \notag \\
   &= \int_0^\infty  {{f_{{{\left| {{h_{ipu}}} \right|}^2}}}} \left( y \right){F_{{{\left| {{{{H}}_{k}}} \right|}^2}}}\left[ {\frac{{x\left( {\varpi \rho y + 1} \right)}}{{\left( {{a_g} - {{\nu _g}}x} \right)\rho }}} \right]dy,
\end{align}
where ${f_{{{\left| {{h_{ipu}}} \right|}^2}}}\left( y \right) = \frac{1}{{{N_{ipu}}}}{e^{ - \frac{y}{{{N_{ipu}}}}}}$. The PDF of the cascaded channels from BS to RIS and to LUs can be expressed as follows \cite{liu2014outage}
\begin{small}
\begin{align}\label{PDF of cascaded channel NotORD}
{f_{{{\left| {{{{H}}_{k}}} \right|}^2}}}\left( z \right) = \frac{{2{z^{\frac{{Q - 1}}{2}}}}}{{\Gamma \left( Q \right){{\left( {\sqrt {{N_{br}}{N_{rk}}} } \right)}^{Q + 1}}}}{K_{Q - 1}}\left( {2\sqrt {\frac{z}{{{N_{br}}{N_{rk}}}}} } \right).
\end{align}
\end{small}
Applying integration operation and some simple manipulations to (\ref{PDF of cascaded channel NotORD}), it can be rewritten as
\begin{align}\label{CDF of cascaded channel NotORD temp}
{F_{{{\left| {{{{H}}_{k}}} \right|}^2}}}\left( z \right) =& \frac{4}{{\Gamma \left( Q \right)}}{\left( {\frac{z}{{{N_{br}}{N_{rk}}}}} \right)^{\frac{{Q + 1}}{2}}}\notag \\ &\times \int_0^1 {{y^Q}} {K_{Q - 1}}\left( {2\sqrt {\frac{z}{{{N_{br}}{N_{rk}}}}} y} \right)dy.
\end{align}
According to \cite[Eq. (6.561.8)]{gradvstejn2000table}, ${F_{{{\left| {{{{H}}_{k}}} \right|}^2}}}\left( z \right)$ can be further rewritten as
\begin{small}
\begin{align}\label{CDF of cascaded channel NotORD}
{F_{{{\left| {{{{H}}_{k}}} \right|}^2}}}\left( z \right) = 1 - \frac{2}{{\Gamma \left( Q \right)}}{\left( {\frac{z}{{{N_{br}}{N_{rk}}}}} \right)^{\frac{Q}{2}}}{K_Q}\left( {2\sqrt {\frac{z}{{{N_{br}}{N_{rk}}}}} } \right).
\end{align}
\end{small}
Considering ${\left| {{\mathbf{h}}_{_{r1}}^H{\mathbf{\Theta }}{{\mathbf{h}}_{br}}} \right|^2} \leqslant  \cdots \leqslant{\left|{{\mathbf{h}}_{_{rk}}^H{\mathbf{\Theta }}{{\mathbf{h}}_{br}}} \right|^2} \leqslant  \cdots  \leqslant {\left| {{\mathbf{h}}_{_{rK}}^H{\mathbf{\Theta }}{{\mathbf{h}}_{br}}} \right|^2}$, the CDF of cascaded channel with order is given by \cite{david2004order,2020orderZhiguo}
\begin{small}
\begin{align}\label{CDF of cascaded channel ORD_temp}
F_{{{\left| {{{{H}}_{k}}} \right|}^2}}^{ORD}\left( z \right) &= \kappa \sum\limits_{l = 0}^{K - k} {{{
  {K - k} \choose
  l }}\frac{{{{\left( { - 1} \right)}^l}}}{{k + l}}}\notag \\  &\times \left[ {1 - \frac{2}{{\Gamma \left( Q \right)}}{{\left( {\frac{z}{{{N_{br}}{N_{rk}}}}} \right)}^{\frac{Q}{2}}}{K_Q}\left( {2\sqrt {\frac{z}{{{N_{br}}{N_{rk}}}}} } \right)} \right]^{k + l}.
\end{align}
\end{small}

Upon substituting (\ref{CDF of cascaded channel ORD_temp}) into (\ref{CDF of k decode g ipSIC temp}), we obtain
\begin{small}
\begin{align}\label{CDF of cascaded channel ORD}
F_{{{\tilde \gamma }_{k \to g}}}^{ipSIC}\left( x \right) =& \frac{{\kappa }}{{{N_{ipu}}}}\sum\limits_{l = 0}^{K - k} {{{
  {K - k} \choose
  l }}\frac{{{{\left( { - 1} \right)}^l}}}{{k + l}}}  \notag \\
&\times \int_0^\infty  {{e^{ - \frac{y}{{{N_{ipu}}}}}}} \left\{ {1 - \frac{2}{{\Gamma \left( Q \right)}}{{\left[ {\frac{{x\left( {\varpi \rho y + 1} \right)}}{{\left( {{a_g} - {{{\nu _g}} }x} \right){\zeta _2}}}} \right]}^{\frac{Q}{2}}}} \right. \notag \\
&\times {\left. {{K_Q}\left[ {2\sqrt {\frac{{x\left( {\varpi \rho y + 1} \right)}}{{\left( {{a_g} - {{\nu _g}} x} \right){\zeta _2}}}} } \right]} \right\}^{k + l}}dy.
\end{align}
\end{small}
Replacing ${y \mathord{\left/
 {\vphantom {y {{N_{ipu}}}}} \right.
 \kern-\nulldelimiterspace} {{N_{ipu}}}}$ with \emph{t}, (\ref{CDF of cascaded channel ORD}) can be rewritten as
\begin{small}
\begin{align}\label{CDF of cascaded channel ORD 2}
F_{{{\tilde \gamma }_{k \to g}}}^{ipSIC}\left( x \right) =& {\kappa }\sum\limits_{l = 0}^{K - k} {{{
  {K - k} \choose
  l }}\frac{{{{\left( { - 1} \right)}^l}}}{{k + l}}}  \notag \\
&\times \int_0^\infty  {{e^{ - {t}}}} \left\{ {1 - \frac{2}{{\Gamma \left( Q \right)}}{{\left[ {\frac{{x\left( {\varpi \rho {{N_{ipu}}t} + 1} \right)}}{{\left( {{a_g} - {{{\nu _g}} }x} \right){\zeta _2}}}} \right]}^{\frac{Q}{2}}}} \right. \notag \\
&\times {\left. {{K_Q}\left[ {2\sqrt {\frac{{x\left( {\varpi \rho {{N_{ipu}}t} + 1} \right)}}{{\left( {{a_g} - {{\nu _g}} x} \right){\zeta _2}}}} } \right]} \right\}^{k + l}}dt.
\end{align}
\end{small}By utilizing Gauss-Laguerre integration and some mathematical operations, we can obtain (\ref{CDF SINR k decode g ipSIC}) and the proof is completed.
\section*{Appendix~B: Proof of Lemma \ref{Lemma2} } \label{AppendixB}
\renewcommand{\theequation}{B.\arabic{equation}}
\setcounter{equation}{0}
The PDF of wiretap channels can be expressed as follows based on (\ref{PDF of cascaded channel NotORD})
\begin{small}
\begin{align}\label{PDF of cascaded wiretap channel ORD}
{f_{{{\left| {{{{H}}_{e}}} \right|}^2}}}\left( v \right) = \frac{{2{v^{\frac{{Q - 1}}{2}}}}}{{\Gamma \left( Q \right){{\left( {\sqrt {{N_{br}}{N_{re}}} } \right)}^{Q + 1}}}}{K_{Q - 1}}\left( {2\sqrt {\frac{v}{{{N_{br}}{N_{re}}}}} } \right).
\end{align}
\end{small}
As a consequence, the expectation of ${{\left| {{{{H}}_{e}}} \right|}^2}$ is shown as
\begin{small}
\begin{align}\label{Exp of cascaded EEchannel}
\mathbb{E}\left( {{{\left| {{{{H}}_{e}}} \right|}^2}} \right) =& \int_0^\infty  {v{f_{{{\left| {{{{H}}_{e}}} \right|}^2}}}\left( v \right)dv}\notag \\  =& \frac{2}{{\Gamma \left( Q \right){{\left( {\sqrt {{N_{br}}{N_{re}}} } \right)}^{Q + 1}}}}\notag \\ &\times\int_0^\infty  {{v^{\frac{{Q + 1}}{2}}}{K_{Q - 1}}\left( {2\sqrt {\frac{v}{{{N_{br}}{N_{re}}}}} } \right)dv} .
\end{align}
\end{small}
Assuming $v = {x^2}$, (\ref{Exp of cascaded EEchannel}) can be recast as
\begin{small}
\begin{align}\label{Exp of cascaded EEchannel 2}
\mathbb{E}\left( {{{\left| {{{{H}}_{e}}} \right|}^2}} \right) =& \frac{4}{{\Gamma \left( Q \right){{\left( {\sqrt {{N_{br}}{N_{re}}} } \right)}^{Q + 1}}}}\notag \\ &\times\int_0^\infty  {{x^{{{Q + 2}}}}{K_{Q - 1}}\left( {2{x}\sqrt {\frac{1}{{{N_{br}}{N_{re}}}}} } \right)dx}.
\end{align}
\end{small}
With the assistance of \cite[Eq. (6.561.16)]{gradvstejn2000table}, we can acquire $\mathbb{E}\left( {{{\left| {{{{H}}_{e}}} \right|}^2}} \right) = Q{N_{br}}{N_{re}}$. According to (\ref{SINR EE decode k v2}) and (\ref{SOP k temp}), $P_{out,k}^{ipSIC,EE}$ is given by
\begin{small}
\begin{align}\label{SOP of EE decode k AppendixB}
\begin{gathered}
  P_{out,k}^{ipSIC,EE}\left( {R_{_k}^{EE}} \right) = P\left[ {{{\tilde \gamma }_k} < {2^{R_k^{EE}}}\left( {1 + {{\tilde \gamma }_{EE \to k}}} \right) - 1} \right] \hfill \\
   \approx P\left\{ {{{\tilde \gamma }_k} < {2^{R_k^{EE}}}\left[ {1 + \frac{{\mathbb{E}\left( {{{\left| {{{{H}}_{e}}} \right|}^2}} \right){a_k}{\rho _e}}}{{\mathbb{E}\left( {{{\left| {{{{H}}_{e}}} \right|}^2}} \right){\rho _e}{{{\nu _k}} } + \varpi {\rho_e} {{\left| {{h_{ipe}}} \right|}^2} + 1}}} \right] - 1} \right\} \hfill \\
   = \int_0^\infty  {{f_{{{\left| {{h_{ipe}}} \right|}^2}}}} \left( y \right){F_{{{\left| {{H_{k}}} \right|}^2}}}\left[ {{2^{R_k^{EE}}}\left( {1 + \frac{{{\chi _1}}}{{{\chi _2} + \varpi {\rho_e} y}}} \right) - 1} \right]dy,
\end{gathered}
\end{align}
\end{small}where ${f_{{{\left| {{h_{ipe}}} \right|}^2}}}\left( y \right) = \frac{y}{{{N_{ipe}}}}{e^{ - \frac{1}{{{N_{ipe}}}}}}$, ${\chi _1} = Q{N_{br}}{N_{re}}{a_k}{\rho _e}$ and ${\chi _2} = Q{N_{br}}{N_{re}} {{\nu _k}}  + 1$. Similar to (\ref{CDF of cascaded channel ORD 2}), after replacing ${y \mathord{\left/
 {\vphantom {y {{N_{ipe}}}}} \right.
 \kern-\nulldelimiterspace} {{N_{ipe}}}}$ with \emph{t}, (\ref{SOP EE decode k ipSIC}) can be derived with the aid of Gauss-Laguerre integration. The proof is completed.

\bibliographystyle{IEEEtran}
\bibliography{VT-2022-02532_bib}

\end{document}